\newcommand{\Sync}{\mathit{Sync}}
\newcommand{\CG}{\Cc\Gg}
\newcommand{\rlp}{r_{l+1}}
\newcommand{\brlp}{\bar{\rlp}}
\newcommand{\cc}{\mathfrak{c}}
\newcommand{\sskip}{\mathit{skip}}
\newcommand{\inctest}{\searrow}
\newcommand{\eqtest}{\downarrow}
\newcommand{\Tower}{\mathit{Tower}}
\newcommand{\state}{\mathit{state}}
\newcommand{\loc}{\mathit{dom}}
\newcommand{\Plays}{\mathit{Plays}}
\newcommand{\view}{\mathit{view}}
\newcommand{\dar}{\downarrow}
\newcommand{\Ssys}{\S^{sys}}
\newcommand{\Senv}{\S^{env}}
\newcommand{\edge}{-\!\!\!-\!\!\!-\!\!\!-\!\!\!-}
\newcommand{\ch}{\mathop{ch}}
\newcommand{\igw}[1]{}
\newcommand{\anca}[1]{}
\newcommand{\hugo}[1]{}
\renewenvironment{proof}{{\em Proof. }}{\nopagebreak
  \hspace*{\fill}$\Box$}
\begin{document}

\title{Asynchronous Games over Tree Architectures}

\author{Blaise Genest$^{1}$, Hugo Gimbert$^2$, Anca Muscholl$^{2}$,
  Igor Walukiewicz$^2$}

 \institute{$^{1}$ IRISA, CNRS, Rennes, France\\
 $^{2}$ LaBRI, CNRS/Université Bordeaux, France \\
   }

\maketitle

\begin{abstract}
 We consider the distributed control problem in the setting of
  Zielonka asynchronous automata. 
  Such automata are compositions of finite processes
  communicating via shared actions and evolving asynchronously. Most
  importantly, processes participating in a shared action can exchange
  complete information about their causal past. This gives more power
  to controllers, and avoids simple pathological undecidable cases as
  in the setting of Pnueli
  and Rosner.
  We show the decidability of the control problem for
  Zielonka automata over acyclic communication architectures.  We
  provide also a matching lower bound, which is $l$-fold exponential,
  $l$ being the height of the architecture tree.
\end{abstract}

\section{Introduction}\label{sec:intro}
Synthesis is by now well understood in the case of sequential systems.
It is useful for constructing small, yet safe, critical modules.
Initially, the {\em synthesis problem} was stated by Church, who asked
for an algorithm to construct devices transforming sequences of input
bits into sequences of output bits in a way required by a
specification~\cite{church62}. Later Ramadge and Wonham proposed the
{\em supervisory control} formulation, where a plant and a
specification are given, and a controller should be designed such that
its product with the plant satisfies the specification~\cite{RW89}. So
control means restricting the behavior of the plant. Synthesis is the
particular case of control where the plant allows for every possible
behavior.

For synthesis of {\em distributed} systems, a common belief is that the
problem is in general undecidable, referring to work by Pnueli and
Rosner~\cite{PR90}. They extended Church's 
formulation to an architecture of {\em synchronously} communicating
processes, that exchange messages through one slot communication
channels. Undecidability in this setting comes mainly from~\emph{partial
information}: specifications permit to control the flow of information about the global state of the system. The only
decidable type of architectures is that of pipelines.

The setting we consider here is based on a by
now well-established model of distributed computation using
shared actions: \emph{Zielonka's asynchronous
  automata}~\cite{zie87}. Such a device is an asynchronous product of
finite-state processes synchronizing on common actions.  
Asynchronicity means that processes can progress at different speed. 
Similarly to~\cite{GLZ04,MTY05} 
we consider the control problem for such automata. 
Given a Zielonka automaton (plant), find another Zielonka automaton
(controller) such that the product of the two satisfies a given
specification. In particular, the controller does
not restrict the parallelism of the system.
Moreover, during synchronization the individual processes of the controller
can exchange all their 
information about the global state of the system. This gives more
power to the controller than in the Pnueli and Rosner model, thus
avoiding simple pathological scenarios leading to undecidability. 
It is still open whether the control problem for Zielonka automata is
decidable.  

In this paper we prove decidability of the control problem  for
reachability objectives on tree architectures. 
In such architectures every process can communicate with its parent,
its children, and with the environment. 
If a controller exists, our algorithm yields a controller that is a
finite state Zielonka automaton exchanging information of
\emph{bounded} size.  We also provide the first non-trivial lower bound for
asynchronous distributed control. It matches the $l$-fold exponential
complexity of our algorithm ($l$ being the height of the
architecture). 

As an example, our decidability result covers client-server
architectures where a server communicates with clients, and server and
clients have their own interactions with the environment
(cf. Figure~\ref{fig:server-client}). Our algorithm providing a
controller for this architecture runs in exponential time. Moreover,
each controller adds polynomially many bits to the state space of the
process.  Note also that this architecture is undecidable
for~\cite{PR90} (each process has inputs), and is not covered
by~\cite{GLZ04} (the action alphabet is not a co-graph), nor
by~\cite{MTY05} (there is no bound on the number of actions performed
concurrently).
\begin{figure}[b]
\centering
  \vspace{-0.7cm}
  \includegraphics[scale=.6]{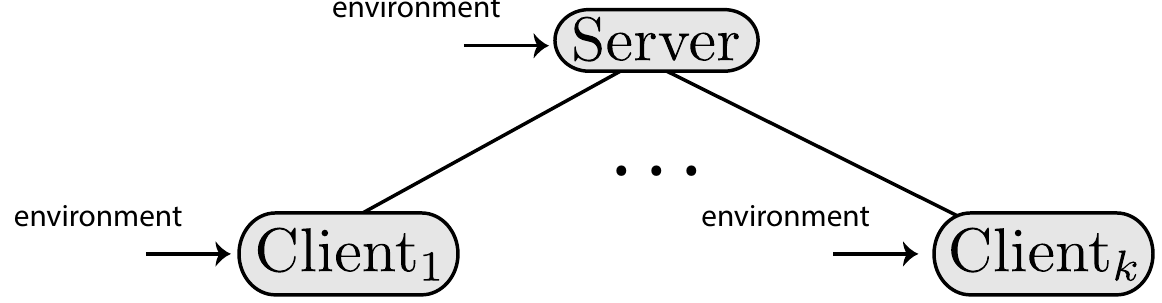}
  \caption{Server/client architecture}
  \label{fig:server-client}
\end{figure}
\medskip

\noindent\textit{Related work.}
The setting proposed by Pnueli and Rosner~\cite{PR90} has been
thoroughly investigated in past years. 
By now we understand that, suitably using
the interplay 
between specifications and an architecture, one can get undecidability
results for most architectures rather easily. While
specifications leading to undecidability are very artificial, no
elegant solution to eliminate them exists at present.

The paper~\cite{kv01} gives an automata-theoretic approach to
solving pipeline architectures and at the same time extends the
decidability results to CTL$^*$ specifications and variations of the
pipeline architecture, like one-way ring architectures. The synthesis
setting is investigated in~\cite{MadThiag01} for local
specifications, meaning that each process has its own, linear-time
specification. For such specifications, it is shown that an
architecture has a decidable synthesis problem if and only if it is a
sub-architecture of a pipeline with inputs at both endpoints.
The paper~\cite{FinSch05} proposes information forks as an uniform
notion explaining the (un)decidability results in distributed
synthesis. 
In~\cite{MeyWil05} the authors consider distributed synthesis for
knowledge-based 
specifications. The paper~\cite{GasSznZei09} studies an interesting case
of external specifications and well-connected architectures.

Synthesis for asynchronous systems has been strongly advocated by Pnueli
and Rosner in \cite{PR89icalp}. Their notion of asynchronicity
is not exactly the same as ours: it means roughly that system/environment
interaction is not
turn-based, and processes observe the system only when
scheduled.
 This notion of asynchronicity appears in
several subsequent works, such as~\cite{sf06,KatSchPel11} for distributed
synthesis.

As mentioned above, we do not know whether the control problem in our setting is
decidable in general. Two related decidability results are known, both
of different flavor that ours. The first one~\cite{GLZ04} restricts the alphabet of actions: control with reachability
condition is decidable for co-graph alphabets.  This restriction 
excludes among others client-server architectures.
The second result~\cite{MTY05} shows decidability by restricting the
plant: roughly speaking, the restriction says that every process
can have only bounded missing knowledge about the other processes (unless
they diverge). The proof
of~\cite{MTY05} goes beyond the controller synthesis problem, by
coding it into monadic second-order theory of event structures and
showing that this theory is decidable when the criterion on the plant
holds. Unfortunately, very simple plants have a  decidable control problem but
undecidable MSO-theory of the associated event structure.
Melli\`es~\cite{mel06} relates game semantics and asynchronous games, played
on event structures.
More recent 
work~\cite{cgw12} considers finite games on event structures
and shows a determinacy result for such games under some restrictions.

\medskip

 \noindent\textit{Organization of the paper.} 
 The next section presents basic definitions. 
 The two consecutive sections present the algorithm and the matching
 lower bound.

\section{Basic definitions and observations}
Our control problem can be formulated in the same way as the Ramadge
and Wonham control problem but using Zielonka automata instead of
standard finite automata. We start by presenting Zielonka automata and
an associated notion of concurrency. Then we briefly recall the
Ramadge and Wonham formulation and our variant of it. Finally, we give
a more convenient game-based formulation of the problem.

\subsection{Zielonka automata}

Zielonka automata are simple parallel devices. Such an
automaton is a parallel
composition of several finite automata, denoted as~\emph{processes},
synchronizing on common actions. There is no global clock, so between
two synchronizations, two processes can do a different number of
actions. Because of this Zielonka automata are also called
asynchronous automata.

A \emph{distributed action alphabet} on a finite set $\PP$ of processes is a
pair $(\S,\loc)$, where $\S$ is a finite set of \emph{actions} and
$\loc:\S \to (2^{\PP}\setminus \es)$ is a \emph{location
  function}. The location $\loc(a)$ of action $a \in\S$ comprises all
processes  that need to synchronize in order to perform this
action. 
A (deterministic) \emph{Zielonka automaton}
$\Aa=\struct{\set{S_p}_{p\in \PP},s_{in},\set{\d_a}_{a\in\S}}$ is
given by 
\begin{itemize}
\item for every process $p$ a finite set $S_p$ of (local) states,
\item the initial state $s_{in} \in \prod_{p \in \PP} S_p$, 
\item for every action $a \in\S$  a partial transition function
$\d_a:\prod_{p\in \loc(a)}S_p\stackrel{\cdot}{\to} \prod_{p\in \loc(a)}S_p$ on tuples of states of processes in
  $\loc(a)$. 
\end{itemize}

For convenience, we abbreviate a tuple $(s_p)_{p \in P}$ of local
states by  $s_P$,  where $P \subseteq \PP$. We also talk about $S_p$
as the set of \emph{$p$-states} and of $\prod_{p \in \PP} S_p$ as
\emph{global states}. Actions from $\S_p=\set{a \in\S \mid p
\in\loc(a)}$ are denoted as \emph{$p$-actions}.

A Zielonka automaton can be seen as a sequential automaton with the
state set $S = \prod_{p\in\PP} S_p$ and transitions $s \act{a} s'$ if
$(s_{\loc(a)}, s'_{\loc(a)}) \in \d_a$, and $s_{\PP\setminus
\loc(a)}=s'_{\PP\setminus \loc(a)}$. By $L(\Aa)$ we denote the set of
words labeling runs of this sequential automaton that start from the
initial state.

This definition has an important consequence. The location mapping
$\loc$ defines in a natural way an independence relation $I$: two
actions $a,b \in\S$ are independent (written as $(a,b)\in I$) if they
involve different processes, that is, if $\loc(a) \cap \loc(b)
=\emptyset$. Notice that the order of execution of two independent
actions $(a,b)\in I$ in a Zielonka automaton is irrelevant, they can
be executed as $a,b$, or $b,a$ - or even concurrently. More generally,
we can consider the congruence $\sim_I$ on $\S^*$ generated by $I$,
and observe that whenever $u \sim_I v$, the global state reached from
the initial state on $u$ and $v$, respectively, is the same. Hence, $u
\in L(\Aa)$ if and only if $v \in L(\Aa)$. Notice also that if $u
\sim_I vx$ and $x\in\S^*$ involves no $p$-action, then the $p$-state
reached on $u$ and $v$, respectively, is the same.

The idea of describing concurrency by an independence relation on
actions goes back to the late seventies, to Mazurkiewicz~\cite{maz77}
and Keller~\cite{kel73} (see also ~\cite{DieRoz95}). An equivalence
class $[w]_I$ of $\sim_I$ is called a Mazurkiewicz \emph{trace}, it can be
also viewed as labeled pomset of a special kind. Here, we will often
refer to a trace using just a word $w$ instead of writing $[w]_I$.  As
we have observed $L(\Aa)$ is a sum of such equivalence classes. In
other words it is \emph{trace-closed}.

\begin{example}
  Consider the following, very simple, example with processes
  $1,2,3$. Process 1 has local actions $a_0,a_1$ and synchronization
  actions $c_{i,j}$ ($i,j = 0,1$) shared with process 2. Similarly,
  process 3 has local actions $b_0,b_1$ and synchronization actions
  $d_{i,j}$ ($i,j = 0,1$) shared with process 2
(cf.~Figure~\ref{f:async} where the symbol $*$ denotes any 
value 0 or 1).
Each process is a finite automaton and the Zielonka
automaton is the product of the three components synchronizing on
common actions. 
We have for instance $(a_i,b_j) \in I$ and $(c_{i,j},d_{k,l}) \notin
I$. The final states are the rightmost states of each automaton. The
automaton accepts traces of the form $a_ib_j c_{i,k} d_{j,l}$ with
$i=l$ or $j=k$.
\begin{figure}[htbp]
\centering
  \includegraphics[scale=.6]{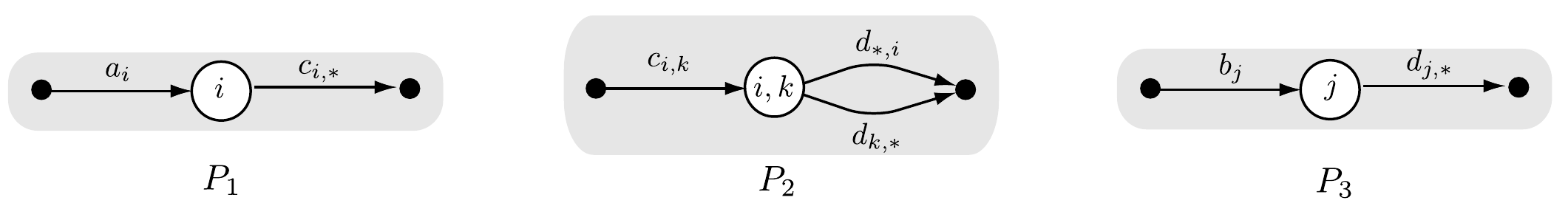}
  \caption{A Zielonka automaton}
  \label{f:async}
\end{figure}
\end{example} 


Since the notion of a trace can be formulated without a reference to an
accepting device, it is natural to ask if the model of Zielonka
automata is powerful enough. Zielonka's theorem says that this is
indeed the case, hence these automata are a right model for
the simple view of concurrency captured by Mazurkiewicz traces.

\begin{theorem}\cite{zie87}
  Let $\loc :\S\to(2^\Pp\setminus\set{\es})$ be a distribution of
  letters. 
  If a language $L\incl \S^*$ is regular and trace-closed then there is a
  deterministic Zielonka automaton accepting $L$ (of size exponential
  in the number of processes and polynomial in the size of the minimal
  automaton for $L$, see~\cite{ggmw10}). 
\end{theorem}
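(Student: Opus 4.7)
The plan is to start from the minimal DFA $\mathcal{A}$ recognizing $L$ and construct a deterministic Zielonka automaton whose processes jointly maintain enough information about the causal past to simulate $\mathcal{A}$ across synchronizations. The approach combines the gossip/time-stamping device of Mukund and Sohoni with a careful choice of residuals to keep the dependence on $|\mathcal{A}|$ polynomial, following~\cite{ggmw10}.

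First I would recall that for a trace $w$ and a process $p$, the $p$-view $\view_p(w)$ is the subtrace formed by the events causally below the last $p$-event of $w$. Trace-closedness of $L$ ensures that the global state reached by $\mathcal{A}$ on a trace depends only on the trace (any linearization works), so membership of $w$ in $L$ can be determined from the state after any linearization. The goal of the Zielonka automaton is therefore to keep, distributed across processes, enough information to recover this state at the end of any trace; in particular, whenever a process $p$ performs an action, $p$ itself should be able to compute, together with its synchronization partners, the relevant residual state of $\mathcal{A}$.

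The core tool is a finite-state gossip component. For a trace $w$ and processes $p,q$, let $k_{p,q}(w)$ denote the position in $w$ of the most recent $q$-event lying in $\view_p(w)$. One designs a finite-state automaton that tracks the vectors $k_{p,\cdot}(w)$ distributed across the processes: at a synchronization of $p$ and $q$ on an action $a$, they merge their knowledge by taking, for each third process $r$, the more recent of $k_{p,r}$ and $k_{q,r}$. The state space of this gossip component is exponential in $|\PP|$ but independent of $\mathcal{A}$.

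On top of the gossip state, each process stores a small table of residual states of $\mathcal{A}$, indexed by a combinatorially controlled family of relevant views. At a synchronization the gossip information dictates which residuals held by the joining processes must be combined and how; the new tables are then computed locally, and membership is read off the appropriate entry at the end. The main obstacle, and the improvement of~\cite{ggmw10} over the naive Mukund--Sohoni construction, is keeping this table polynomial in $|\mathcal{A}|$: the naive approach indexes residuals by arbitrary subsets of $\PP$, giving a state space of size $|\mathcal{A}|^{2^{|\PP|}}$, whereas~\cite{ggmw10} stores only residuals indexed essentially by individual processes (together with the gossip-determined combinations), exploiting minimality of $\mathcal{A}$ to collapse redundant information. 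Determinism is immediate since all update rules at synchronizations are functional, and the final size is single-exponential in $|\PP|$ and polynomial in $|\mathcal{A}|$, as required.
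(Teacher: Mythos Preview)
The paper does not give a proof of this theorem at all: it is stated as a cited background result, with the existence of the automaton attributed to Zielonka~\cite{zie87} and the complexity bound to~\cite{ggmw10}. There is therefore no in-paper proof to compare your proposal against.

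That said, your sketch is a faithful outline of the construction in the cited references. The decomposition into a gossip/time-stamping component (exponential in $|\PP|$, independent of $\mathcal{A}$) layered with per-process tables of residual states of the minimal DFA is exactly the architecture of Mukund--Sohoni, and you correctly identify the point of~\cite{ggmw10}: avoiding tables indexed by all subsets of $\PP$, which would give $|\mathcal{A}|^{2^{|\PP|}}$, in favour of a representation polynomial in $|\mathcal{A}|$. As a high-level plan this is correct; a complete proof would of course need to make precise the ``combinatorially controlled family of relevant views'' and verify that the merge rules at synchronizations are well-defined functions of the local states alone, but those details are in the cited papers rather than here.
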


\medskip

One could try to use Zielonka's theorem directly to solve a
distributed control problem. For example, one can start with the
Ramadge and Wonham control problem, solve it, and if a solution happened to
respect the required independence, then distribute
it. Unfortunately, there is no reason for the solution to respect the
independence. Even worse, the following, relatively simple, result
says that it is algorithmically impossible to approximate a
regular language by a language respecting a given independence
relation.

\medskip

\begin{theorem}\cite{SEM03}
  It is not decidable if, given a distributed alphabet and a
  regular language $L\incl \S^*$, there is a trace-closed language
  $K \subseteq L$ such that every letter from $\S$ appears in some
  word of $K$.
\end{theorem}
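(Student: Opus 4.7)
The natural route is a reduction from (a suitable variant of) the Post Correspondence Problem. Given a PCP instance $(u_1, v_1), \ldots, (u_n, v_n)$ over an alphabet $\Gamma$, the plan is to construct a distributed alphabet $(\Sigma, \loc)$ and a regular language $L \subseteq \Sigma^*$ such that $L$ admits a trace-closed subset hitting every letter of $\Sigma$ iff the PCP instance has a solution.

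Take two entirely independent processes $p_1, p_2$: $\Sigma = \Sigma_1 \uplus \Sigma_2$ with $\loc(\sigma) = \{p_i\}$ for $\sigma \in \Sigma_i$. Since every $\Sigma_1$-letter is independent of every $\Sigma_2$-letter, the trace class of a word $w$ is exactly the set of all interleavings of $w\proj{\Sigma_1}$ and $w\proj{\Sigma_2}$. A trace-closed $K \subseteq L$ is thus a disjoint union of such shuffle classes, each determined by a pair $(u, v) \in \Sigma_1^* \times \Sigma_2^*$ \emph{all} of whose interleavings lie in $L$; I will call such pairs \emph{good}. The covering condition reduces to: for every $\sigma \in \Sigma$, is there a good pair using $\sigma$?

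To encode PCP, let $\Sigma_1$ hold the index symbols $1, \ldots, n$ together with a copy $\underline{\Gamma}$ of $\Gamma$, and $\Sigma_2$ hold barred analogues. I would design $L$ so that the good pairs are exactly PCP encodings: $u = i_1 \cdots i_k \, \underline{u_{i_1}} \cdots \underline{u_{i_k}}$, $v = \bar i_1 \cdots \bar i_k \, \overline{v_{i_1}} \cdots \overline{v_{i_k}}$, with $u_{i_1} \cdots u_{i_k} = v_{i_1} \cdots v_{i_k}$. The crucial observation is that ``$(u, v)$ good'' means \emph{every} interleaving of $u$ and $v$ lies in $L$, so $L$ can refuse a non-encoding pair by catching just \emph{one} revealing interleaving per pair. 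In particular, $L$ can detect the ``lockstep'' interleaving (strict alternation $\Sigma_1 \Sigma_2 \Sigma_1 \Sigma_2 \ldots$) and, when the word has that shape, require that adjacent pairs match (each index with its barred counterpart, each content letter with its bar). A pair with matching contents survives this check on its unique lockstep interleaving and is accepted on every non-lockstep interleaving; a pair with mismatched contents is refused on its lockstep interleaving. A few auxiliary constructions handle the remaining issues (in particular forcing $|u| = |v|$ via padding so the lockstep interleaving always exists, and encoding ``every pair used'' through coverage of the index letters, which falls back to the standard undecidability of the total-use variant of PCP).

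The chief obstacle is keeping $L$ regular while strong enough that every non-encoding pair is caught by some refused interleaving. The lockstep trick is the cleanest hook I can see, but it relies on being able to ``align'' the content letters of $u$ and $v$ uniformly, which in turn requires careful alphabet design (pad symbols, block markers, length-matching gadgets) so that every genuine PCP solution does admit a lockstep interleaving and no non-PCP pair escapes detection. Once these syntactic details are nailed down, $L$ is recognised by a DFA that tracks the previous letter and whether we are still in the index or the content phase, and the equivalence between the covering question and PCP solvability is a routine verification.
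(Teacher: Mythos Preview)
The paper does not prove this theorem at all; it simply cites the result from~\cite{SEM03}. So there is no in-paper argument to compare your sketch against, and I can only assess the sketch on its own merits.

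Your high-level plan --- two fully independent processes, so that trace classes are shuffle classes of pairs $(u,v)\in\Sigma_1^*\times\Sigma_2^*$, and a PCP reduction making the ``good'' pairs coincide with PCP solutions --- is indeed the natural route and is how results of this type are typically obtained. But the concrete mechanism you propose has two genuine gaps that are not mere ``syntactic details'':

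\medskip
\noindent\textbf{(1) Unconditional acceptance of non-lockstep words trivialises the problem.} Since $\Sigma_1$ and $\Sigma_2$ are disjoint, every word $w$ determines a unique pair $(w|_{\Sigma_1},w|_{\Sigma_2})$. Under your $L$ --- accept every non-lockstep word, check only the strict-alternation ones --- any pair with \emph{no} strict-alternation interleaving is automatically good. In particular $(\sigma\sigma,\varepsilon)$ is good for every $\sigma\in\Sigma_1$ (its sole interleaving $\sigma\sigma$ is not alternating), and symmetrically for $\Sigma_2$. Hence every letter is covered regardless of the PCP instance, and the answer to your constructed instance is always ``yes''. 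You write that ``padding'' forces $|u|=|v|$, but $L$ cannot force anything about pairs: it can only refuse individual words, and it must still accept \emph{every} interleaving of each genuine PCP pair, including interleavings with arbitrarily long monochromatic runs. So no run-length or tail-based refusal can separate the unwanted unequal-length pairs from the legitimate ones.

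\medskip
\noindent\textbf{(2) The lockstep check does not tie content to indices.} Even restricting to pairs with $|u|=|v|$, your lockstep test only certifies that $v=\bar u$ letter-by-letter: same index block, same content block. It does \emph{not} certify that the content block of $u$ equals $u_{i_1}\cdots u_{i_k}$, i.e., that the content is the PCP expansion of the index sequence. That property is not local to adjacent letters and is not recognised by a DFA that merely ``tracks the previous letter and the phase''. With your encoding, the pair $(i\,\underline c,\ \bar i\,\bar c)$ is good for \emph{any} index $i$ and \emph{any} content letter $c$, whether or not $u_i=c=v_i$.

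\medskip
Both issues can be repaired --- the theorem is of course true --- but doing so requires a different encoding. The standard fix is to interleave index markers with content inside each process word (so that ``after marker $i$ comes $u_i$'' becomes a regular constraint on $\Sigma_1^*$ alone), and then to design $L$ so that for every non-solution pair one can exhibit a \emph{specific} refused interleaving chosen according to the \emph{type} of defect (wrong local format, mismatched indices, mismatched content letter at a witnessed position), rather than relying on a single global lockstep interleaving. That design is where the actual work lies, and it is substantially more than the DFA you describe.
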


\medskip

The condition on appearance of letters above is not crucial for the
above undecidability result. Observe
that we need some condition in order to make the problem nontrivial,
since by definition the empty language is trace-closed.

\subsection{The control problem}

We can now formulate our control problem as a variant of the 
Ramadge and Wonham formulation. We will then provide an equivalent
description of the problem in terms of games. While more complicated
to state, this description is easier to work with. 

Recall that in Ramadge and Wonham's control problem~\cite{RW89} we are
given an alphabet $\S$ of actions partitioned into system and
environment actions: $\S^{sys}\cup\S^{env}=\S$. Given a plant $P$ we
are asked to find a controller $C$ such that the product $P\times C$
satisfies a given specification. Here both the plant and the
controller are finite deterministic automata over $\S$. Additionally,
the controller is required not to block environment actions, which in
technical terms means that from every state of the controller there
should be a transition on every action from $\S^{env}$.

Our control problem can be formulated as follows: Given a distributed
alphabet $(\S,\loc)$ as above and a Zielonka automaton $P$,
find a Zielonka automaton $C$ over the same distributed alphabet such
that $P\times C$ satisfies a given specification. Additionally
the controller is required not to block uncontrollable actions: from
every state of $C$ every uncontrollable action  should be possible.
 The important point is
that the controller should have the same distributed structure as the
plant. The product of the two automata, that is just the standard
product, means that plant and controller are totally synchronized, in
particular communications between processes happen at the same
 time. 
Hence concurrency in the controlled system is the same as in
the plant. The major difference between the controlled system and the
plant is that the states carry the additional information computed by
the controller.

\begin{example}
  Reconsider the automaton in Figure~\ref{f:async} and assume that
  $a_i,b_j \in \Senv$ are uncontrollable. So the controller needs to
  propose controllable actions $c_{i,k}$ and $d_{j,l}$, resp., in such a way that all
  processes reach their final state. In particular, process 2 should
  not block. At first sight this may seem impossible to guarantee, as it
  looks like process $1$ needs to know what $b_j$ process $3$ has
  received, or process $3$ needs to know about the $a_i$ received by
  process $1$. Nevertheless, a controller exists. It consists of
  $P_1$ proposing $\set{c_{ii}}$ at state $i$, process $P_3$ proposing
  $\set{d_{j,1-j}}$ at state $j$ and process $P_2$ proposing all
  actions. If $i=j$ then $P_2$ reaches the final state by the
  transition $d_{k,*}$, else by the transition $d_{*,i}$.
\end{example}

It will be more convenient to work with a game formulation of this
problem. Instead of talking about controller we will talk about
distributed strategy in a game between \emph{system} and
\emph{environment}. A plant defines a game arena, with plays
corresponding to initial runs of $\Aa$. Since $\Aa$ is deterministic,
we can view a play as a word from $L(\Aa)$ - or a trace, since
$L(\Aa)$ is trace-closed.  Let
$\Plays(\Aa)$ denote the set of traces associated with words from
$L(\Aa)$.

A strategy for the system will be a collection of individual
strategies for each process. The important notion here is the view 
each process has about the global state of the system. Intuitively this is
the part of the current play that the process could see or  learn
about from other processes during a communication with them. Formally,
the $p$-view of a play $u$, denoted $\view_p(u)$, is the smallest
trace $[v]_I$ such that $u \sim_I vy$ and $y$ contains no action from
$\S_p$. We write $\Plays_p(\Aa)$ for the set of plays that are
$p$-views:
\[\Plays_p(\Aa)=\set{\view_p(u) \mid u\in\Plays(\Aa)}\,.
\]

A \emph{strategy for a process} $p$ is a function
$\s_p:\Plays_p(\Aa)\to 2^{\Ssys_p}$, where $\Ssys_p=\set{a \in \Ssys
\mid p\in\loc(a)}$.  We require in addition, for every $u \in
\Plays_p(\Aa)$, that $\s_p(u)$ is a subset of the actions that are
possible in the $p$-state reached on $u$. A \emph{strategy} is a
family of strategies $\set{\s_p}_{p\in \PP}$, one for each process.

The set of plays respecting a strategy $\s=\set{\s_p}_{p\in
  \PP}$, denoted $\Plays(\Aa,\s)$,  is the smallest set  
containing the empty play $\e$, and such that for every $u\in \Plays(\Aa,\s)$:
\begin{enumerate}
\item if $a\in \Senv$ and $u a \in \Plays(\Aa)$ then $u a$ is in
  $\Plays(\Aa,\s)$;
\item if $a\in \Ssys$ and $u a\in \Plays(\Aa)$ then $u
  a\in\Plays(\Aa,\s)$ provided that $a\in \s_p(\view_p(u))$ \emph{for all}
  $p\in\loc (a)$.
\end{enumerate}
Intuitively, the definition says that actions of the environment are
always possible, whereas actions of the system are possible only if
they are allowed by the strategies of all involved processes.
As in \cite{MTY05} (and unlike \cite{GLZ04}) our strategies
are process-based. That is, a controllable action $a$ with
$\loc(a)=\set{p,q}$ is allowed from $(s_p,s_q)$ if it is proposed by
process $p$ in state $s_p$ and by process $q$ in state $s_q$.
Before
defining winning strategies, we need to introduce infinite plays that
are consistent with a given strategy $\s$. Such plays can be seen as
(infinite) traces associated with infinite, initial runs of $\Aa$
satisfying the two conditions of the definition of
$\Plays(\Aa,\s)$. We write $\Plays^\infty(\Aa,\s)$ for the set of
finite or infinite such plays.  A play from $\Plays^\infty(\Aa,\s)$ is
also denoted as~\emph{$\s$-play}.

A play $u \in \Plays^\infty(\Aa,\s)$ is called
\emph{maximal}, if there is no action $c$ such that
$uc\in \Plays^\infty(\Aa,\s)$.  In particular, $u$ is maximal if
$\view_p(u)$ is infinite for every process $p$. Otherwise, if
$\view_p(u)$ is finite then $p$ cannot have enabled local actions
(either controllable or uncontrollable). Moreover there should be no
communication possible between any two processes with finite views in
$u$.

In this paper we consider \emph{local reachability} winning
conditions. For this, every process has a set of target states $F_p
\subseteq S_p$. We assume that states in $F_p$ are \emph{blocking},
that is they have no outgoing transitions. This means that if
$(s_{\loc(a)},s'_{\loc(a)}) \in \d_a$ then $s_p \notin F_p$ for all $p
\in \loc(a)$. 

\begin{definition}
The \emph{control problem} for a plant $\Aa$ and a local reachability
condition $(F_p)_{p \in \PP}$ is to determine if there is a strategy
$\s=(\s_p)_{p \in \PP}$ such that every maximal trace $u \in
\Plays^\infty(\Aa,\s)$ ends 
in $\prod_{p \in \PP} F_p$ (and is thus finite).  Such traces and
strategies are called \emph{winning}.
\end{definition}

As already mentioned, we do not know if this control problem is
decidable in general. In this paper we put one restriction on possible
communications between processes. First, we impose two
simplifying assumptions on the distributed alphabet $(\S,\loc)$. The 
first one is that all actions are at most binary: $|\loc(a)|\le 2$,
for every $a \in \S$. The second requires that all uncontrollable
actions are local: $|\loc(a)|=1$, for every $a \in
\Senv$. So the first restriction says that we allow only
binary synchronizations. It makes the technical reasoning much
simpler. The second restriction reflects the fact that each process is
modeled with its own, local environment.

\begin{definition}\label{df:com graph}
A distributed alphabet $(\S,\loc)$ with unary and binary actions
defines an undirected graph $\CG$ with node set $\PP$ and edges $\{p,q\}$
if there exists $a \in\S$ with $\loc(a)=\{p,q\}$, $p \not= q$. Such a
graph is called \emph{communication graph}. 
\end{definition}




\section{The upper bound for acyclic communication graphs}

We fix in this section a distributed alphabet
$(\S,\loc)$. According to Definition~\ref{df:com graph} the alphabet
determines a communication graph $\CG$. We assume that $\CG$ is
acyclic and has at least one edge. This allows us to choose a leaf $r
\in \PP$ in $\CG$, with $\{q,r\}$ an edge in $\CG$. Throughout this
section, $r$ denotes this fixed leaf process and $q$ its parent
process. Starting from a
control problem with input $\Aa$, $(F_p)_{p \in\PP}$ we define below a
control problem over the smaller (acyclic) graph $\CG'=\CG_{\PP
  \setminus \{r\}}$. The construction will be an exponential-time
reduction from the control problem over $\CG$ to a control problem
over $\CG'$. If we represent $\CG$ as a tree of depth $l$ then
applying this construction iteratively we will get an $l$-fold
exponential algorithm to solve the control problem for $\CG$
architecture.

The main idea of the reduction is simple: process $q$ simulates
the behavior of process $r$. The reason why a simulation can work is
that after each synchronization between $q$ and $r$, the views of both
processes are identical, and between two such synchronizations $r$
evolves locally. But the construction is more delicate than this
simple description suggests, and needs some preliminary considerations
about winning strategies.

We start with a lemma showing how to
restrict the winning strategies. 
For $p,p' \in \PP$ let $\S_{p,p'} =\{a
\in\S \mid \loc(a)=\{p,p'\}\}$. So $\S_{p,p'}$ is the set of
synchronization actions between $p$ and $p'$. Moreover $\S_{p,p}$ is
just the set of local actions of $p$. We write $\S^{loc}_p$ instead of
$\S_{p,p}$ and $\S^{com}_p = \S_p \setminus \S^{loc}_p$. Recall that in
the lemma below $r$ is the fixed leaf process, and $q$ its
parent.

 \begin{lemma}\label{l:separation}
  If there exists some winning strategy for $\Aa$, then there is one, say
  $\s$,  such that for every  $u \in \Plays(\Aa,\s)$ the
  following hold:
  \begin{enumerate}
  \item If an uncontrolable action is possible from a state $s_r$ of
    process $r$ then for every play $u$ with $\state_r(u)=s_r$ we have
    $\s_r(\view_r(u))=\es$.
  \item For every process $p$ and $X=\s_p(\view_p(u))$, we have either
    $X=\set{a}$ for some $a\in\S^{loc}_p$ or $X\incl\S_p^{com}$.
  \item Let $X=\s_q(\view_q(u))$ with $X\incl\S_q^{com}$. Then either
    $X \subseteq \S_{q,r}$ or $X\incl\S_q^{com} \setminus \S_{q,r}$
    holds.
  \end{enumerate}
\end{lemma}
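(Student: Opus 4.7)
The plan is to take an arbitrary winning strategy and refine it pointwise into a winning strategy satisfying (1)--(3). Write $\s$ for the initial winning strategy; I will construct $\s'$ with $\s'_p(v)\subseteq\s_p(v)$ at every process $p$ and every $p$-view $v$, which ensures $\Plays(\Aa,\s')\subseteq\Plays(\Aa,\s)$. At each $p$ and $v$, define $\s'_p(v)$ by the priority list: (a) if $p=r$ and some uncontrollable action is enabled at $\state_r(v)$, set $\s'_r(v):=\es$; (b) else, if $\s_p(v)$ contains a local action, set $\s'_p(v):=\{a\}$ for a fixed $a\in\s_p(v)\cap\S^{loc}_p$; (c) else $\s_p(v)\subseteq\S^{com}_p$, and if $p=q$ with $\s_q(v)$ meeting both $\S_{q,r}$ and $\S_q^{com}\setminus\S_{q,r}$, retain only one side (chosen as below); (d) otherwise $\s'_p(v):=\s_p(v)$. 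Properties (1)--(3) then hold at every play by construction.

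Winning is preserved once we show every maximal $\s'$-play is also $\s$-maximal, since then it reaches the target states by the winning property of $\s$. Suppose for contradiction $u$ is $\s'$-maximal while some controllable $c$ with $u\cdot c\in\Plays(\Aa,\s)$ is not in $\Plays(\Aa,\s')$. Since uncontrollable actions are never blocked, $c$ must have been dropped by one of rules (a)--(c) at some $p\in\loc(c)$. In case (a), an uncontrollable $b$ is enabled at $\state_r(u)$, so $u\cdot b\in\Plays(\Aa,\s')$, contradicting $\s'$-maximality. In case (b), $\s'_p(\view_p(u))=\{a\}$ with $a$ locally enabled, so $u\cdot a\in\Plays(\Aa,\s')$, again contradicting $\s'$-maximality.

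The remaining case (c) at $q$ is the delicate one, since restricting $q$'s communication set to one side of the partition $\{\S_{q,r},\S_q^{com}\setminus\S_{q,r}\}$ may eliminate the synchronization partner on whom $\s$ relied, producing an $\s'$-deadlock at a non-target state. My plan is to make the side-choice at each $\view_q(v)$ carefully by inspecting the (finite) $\s$-play tree: for instance, keep the $\S_{q,r}$-side at $v$ whenever some $\s$-play $u$ with $\view_q(u)=v$ has an action $c'\in\S_{q,r}\cap\s_q(v)$ that is $\s$-jointly-proposed (i.e., $c'\in\s_r(\view_r(u))$), and otherwise keep the complementary side. I expect the main obstacle to be verifying that this side choice is consistent as a function of $\view_q$ across all $u$ sharing the same $q$-view, and that the resulting $\s'$ really has no $\s'$-maximal, non-$\s$-maximal play under rule (c); fitting this together likely requires combining the fact that $q$'s local state depends only on $\view_q$ with the observation that rules (a) and (b) at $r$ would each already yield a contradictory $\s'$-extension, forcing the relevant $\s_r$-proposal to be inherited in $\s'_r$.
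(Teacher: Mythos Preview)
Your treatment of items (1) and (2) via rules (a) and (b) is correct and matches the paper. The gap is entirely in item (3): your tentative side-choice rule has the wrong quantifier.

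You keep the $\S_{q,r}$-side at a $q$-view $v$ whenever \emph{some} $\s$-play $u$ with $\view_q(u)=v$ admits a jointly-proposed action in $\S_{q,r}$. Write $B=\s_q(v)\cap\S_{q,r}$, $C=\s_q(v)\setminus\S_{q,r}$, $s_q=\state_q(v)$. Even if one local $r$-future reaches a state where a handshake in $B$ is enabled, a different sequence of local $r$-moves (possibly forced by the environment) can leave $r$ in a state $s_r$ proposing some $A\subseteq\S_{q,r}$ with $(s_r,A)\bowtie(s_q,B)=\es$. In the resulting play $w$ you still have $\view_q(w)=v$, hence $\s'_q(v)=B$ by your rule; now $q$ is blocked, $r$ is blocked (rules (a),(b) left $\s'_r$ proposing only $A$ here), and other neighbours of $q$ may be waiting on actions in $C$, which you dropped. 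This yields a $\s'$-maximal non-final play, so $\s'$ is not winning.

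The paper's criterion reverses the quantifier. It first collects $\Sync^\s_r(v)$, the set of all pairs $(s_r,A)$ reachable from $v$ by local $\s$-moves of $r$ ending in a communication proposal $A\subseteq\S_{q,r}$ or a final state, and sets $\s'_q(v)=C$ if some $(s_r,A)\in\Sync^\s_r(v)$ satisfies $(s_r,A)\bowtie(s_q,B)=\es$, and $\s'_q(v)=B$ otherwise. The $B$-branch is then safe precisely because the condition is universal over $r$'s local futures. For the $C$-branch one must argue separately that dropping $B$ creates no deadlock: take the witnessing bad $r$-future $x$, splice it into the hypothetical $\s'$-maximal play in place of $r$'s actual local segment, and obtain a $\s$-maximal play in which $q$ is not final, contradicting that $\s$ wins. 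Your closing remarks gesture toward such an argument, but the correct criterion and this splicing construction are what is missing. (Incidentally, your worry about ``consistency as a function of $\view_q$'' is a non-issue: your rule is already phrased on $v$.)
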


\begin{proof}
  The first item is immediate, since uncontrollable actions are alwyas
  possible. For the second item we modify $\s$ into $\s'$ as
  follows. If $\s_p(u)$ contains some local action, then we choose
  one, say $a$, and put $\s'_p(u)=\set{a}$. We do this for every
  process $p$ and show that the resulting strategy $\s'$ is
  winning. Suppose that $v \in \Plays(\Aa,\s')$ is maximal, but not
  winning. Clearly $v$ is a $\s$-play, but not a maximal one, since
  $\s$ is winning.  Thus, there is $vc \in \Plays(\Aa,\s)$ for some
  processes $p \not=p'$ and some $c \in \S_{p,p'}$. By definition of
  $\s'$ it means that either $\s_p(\view_p(v))$ or
  $\s_{p'}(\view_{p'}(v))$ contains some local action, say $a \in
  \s_p(\view_p(v))$ and $\s'_p(\view_p(v))=\set{a}$. But then $va$ is
  a $\s'$-play, a contradiction with the maximality of $v$.

For the last item we can assume that $\s_q$ and $\s_r$ always
  propose either a local action or a set of communication actions. Now
  given a winning strategy $\s$ we will produce a winning strategy
  $\s'$ satisfying the condition of the lemma, by modifying only $\s_r$.

 Assume that $u \in \Plays_q(\Aa,\s)$ with $s_q=\state_q(u)$, and
 $\s_q(u)=B \cup C$, where $B 
 \subseteq \S_{q,r}$ and $C \subseteq \S_q^{com} \setminus \S_{q,r}$
 with both $B,C$  non-empty.
 We define $\s'_q$ by cases:
 \begin{equation*}
   \s'_q(u)=\begin{cases}
     C & \text{there exists $(s_r,A)\in\Sync^{\s}_r(u)$ with
       $(s_r,A) \bowtie (s_q,B)=\es$,}\\
     B & \text{otherwise}.
   \end{cases}
 \end{equation*}
The idea behind the definition above is simple: if there is a possible
local future for $r$ that makes synchronization with $q$ impossible (first
case), then $q$'s strategy can as well propose only communication with
other processes than $r$ -- since such communication leads to winning
as well. If not, $q$'s strategy can  offer only communication with $r$,
since this choice will never block.

We show now that $\s'$ is winning.  Assume by contradiction that $v$
is a maximal $\s'$-play, but not winning.  It is then a $\s$-play, but
not a maximal one. So there must be some $a \in \S_q^{com}$ such that
$va \in \Plays(\Aa,\s)$.  In particular, $q$'s state after $v$ is not
final.  Let $u=\view_q(v)$, $s_q=\state_q(u)$, and $\s_q(u)=B\cup C$
with $B\incl \S_{q,r}$ and $C\incl \S^{com}_q\setminus \S_{q,r}$. We
have two cases.

Suppose $\s'_q(u)=C$, so we are in the first case of the above
above. Thus there exists $(s_r,A)\in\Sync^\s_r(u)$ such that $(s_r,A)
\bowtie (s_q,B) =\es$. By definition of $\Sync^\s_r$ we find
$x\in(\S^{loc}_r)^*$ such that $u'=ux$ is a $\s$-play and
$\s_r(\view_r(u'))=A$. Since $u=\view_q(u')$, we have
$\s_q(\view_q(u'))=B\cup C$. This means that no communication between
$q$ and $r$ is possible after $u'$. No local action of $q$ is possible
after $u'$ since $u=\view_q(v)$, and we have assumed that $v$ is a
maximal $\s'$-play. Finally, by the choice of $x$, no local action of
process $r$ is possible from $u'$. To obtain a contradiction it
suffices to show that $u'$ can be extended to a maximal $\s$-play by
adding a sequence of actions $w$ of processes other than $q$ and
$r$. This will do as $\state_q(u')$ is not accepting by assumption,
and we will get a maximal $\s$-play that is not winning.  To find the
desired $w$ observe that $v \sim uwy$ where
$w\in(\S\setminus (\S_q\cup\S_r))^*$ and $y\in\S^*_r$. So $y$
represents the actions of $r$ after the last action of  $q$ in $v$,
and $w$ represents the actions of other processes.  Taking $v'=uwx$ we
observe that $v' \sim u'w$ and 
that $v'$ is a maximal $\s$-play. So we have found the desired $w$.

The second case is when $\s'_q(u)=B$. This means that for all 
$(s_r,A)\in\Sync^\s_\ell(u)$, we have 
$(s_r,A) \bowtie (s_r,B) \neq \es$. Since $v$ is a maximal $\s'$-play,
 no local action of $r$ is possible. This means that
 $(s_r,A):=(\state_r(v),\s_r(\view_r(v)))\in\Sync^\s_r(u)$. But then
 $(s_r,A) \bowtie (s_q,B)\not=\es$. Since $\s'_q(u)=B$ there is
 some  possible communication between
 $q$ and $r$ after $v$, so $v$ is not maximal
 w.r.t.~$\s'$.
\end{proof}

The following definition associates with a strategy $\s$ and the leaf
process $r$ all the outcomes of local plays of $r$ such that $r$ is
either waiting for a synchronization with $q$ or is in a final (hence
blocking) state.  For an initial run $u$ of $\Aa$ we denote by
$\state_p(u)$ the $p$-state reached by $\Aa$ on $u$.

 \begin{definition}\label{d:outcome}
Given a strategy $\s$ and a
$\s$-play $u$, let $\Sync_r^\s(u)\subseteq S_r \times \Pp(\S_{q,r})$ be the set:
\begin{align*}
  \Sync_r^\s(u)=\{(s_r,A) \mid \ & \exists x\in(\S^{loc}_r)^*\, .\
    \text{$ux$ is a $\s$-play,}\\
      & \state_r(ux)=s_r,\, 
     \s_r(\view_r(ux))=A\incl \S_{q,r}, \text{and}\\
     &s_r \text{ final or } A\not=\es\}\,.
\end{align*}   
 \end{definition}
Observe that if $\s$ allows $r$ to reach a final state $s_r$ from $u$
without communication, then $(s_r,\es)\in \Sync_r^\s(u)$. This is so,
since final states are assumed to be blocking.

For the game reduction we need to precalculate all possible sets
$\Sync^\s_r$. These sets will be actually of the special form
described below.

\begin{definition}
  Let $s_r$ be a state of $r$.  We say that $T\incl
  S_r\times\Pp(\S_{qr})$ is an \emph{admissible plan in $s_r$} if
  there is a play $u$ with $\state_r(u)=s_r$, and a strategy $\s$ such
  that (i) $T=\Sync^\s_r(u)$, (ii) every $\s$-play of $r$ from $s_r$
  reaches a final state or a state where $\s$ proposes some communication
  action, and (iii) one of the following holds:
  \begin{itemize}
  \item $A\not=\es$ for every $(t_r,A)\in T$, or
    \item $t_r \in F_r$ and $A=\es$ for every $(t_r,A)\in T$.
  \end{itemize}
  In the second case $T$ is called a \emph{final plan}.
\end{definition}

It is not difficult to see that we can compute the set of all
admissible plans. In the above definition we do not ask that $\s$ is
winning in the global game, but just that it can locally bring $r$ to one
of the situations described by $T$. So verifying if $T$ is an
admisible plan simply amounts  to solve a 2-players reachability game on process
$r$ against the (local) environment.

Lemma~\ref{l:rfuture} below allows to deduce that the sets $\Sync_r^\s$
are admissible plans whenever $\s$ is winning. For $(s_r,A), (s_q,B)$ with $s_q \in S_q, s_r \in
S_r$, $A,B \subseteq \S_{q,r}$ let $(s_r,A) \bowtie (s_q,B):=\set{a
  \in A \cap B \mid \d_a(s_q,s_r) \text{ is defined}}$. So $(s_r,A)
\bowtie (s_q,B)$ contains all actions belonging to both $A$ and $B$,
that are enabled in the state $(s_q,s_r)$.

\begin{lemma}\label{l:rfuture}
  If $\s$ is a winning strategy 
  satisfying Lemma~\ref{l:separation} then for every $\s$-play $u$ in $\Aa$ we have:
  \begin{enumerate}
  \item if there is some $\s$-play $uy$ with $y\in(\S \setminus \S_r)^*$ and
    $\state_q(uy)\in F_q$  then $\Sync^\s_r(u)$ is a final plan;
  \item if there is some $\s$-play $uy$ with $y\in(\S \setminus
    \S_r)^*$, $s_q=\state_q(uy)$, $\s_q(uy)=B\incl \S_{q,r}$, and
    $B\not=\es$ then for every $(t_r,A)\in\Sync^\s_r(u)$ we have
    $(s_q,B) \bowtie (t_r, A)\not=\es$.
  \end{enumerate}
 In particular,  $\Sync^\s_r(u)$ is always an admissible plan.
\end{lemma}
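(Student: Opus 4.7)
The plan is to prove both items by contradiction via a common construction: combine the witness play $ux$ (with $x\in(\S^{loc}_r)^*$) for a given element $(t_r,A)\in\Sync^\s_r(u)$ with the hypothesized play $uy$ (with $y\in(\S\setminus\S_r)^*$) into a single $\s$-play. Since $x$ contains only $r$-local actions while $y$ contains no $r$-action, every letter of $x$ is independent of every letter of $y$, so $uxy\sim_I uyx$ is again a $\s$-play; after it, process $q$ is in $\state_q(uy)$ and process $r$ is in $t_r$, and moreover $\s_q(\view_q(uxy))=\s_q(\view_q(uy))$ and $\s_r(\view_r(uxy))=\s_r(\view_r(ux))=A$ because a $p$-view is insensitive to actions not involving $p$. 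Each contradiction will then come from extending $uxy$ by actions of $\PP\setminus\{q,r\}$ to a maximal $\s$-play in which $q$ is stuck in a non-final state.

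For item 1 I would suppose $A\neq\es$. After $uxy$, process $q$ is in a state of $F_q$, which is blocking, while $\s_r$ proposes exactly $A\subseteq\S_{q,r}$: by Lemma~\ref{l:separation}(2), a nonempty set of communication actions excludes a simultaneously proposed local action. Hence $q$ cannot move and $r$ can neither move locally nor synchronize with the blocked $q$. Any maximal extension of $uxy$ by actions of the remaining processes leaves $r$ at $t_r$; but $A\neq\es$ forces $t_r\notin F_r$, because final states are blocking and therefore force $\s_r$ to propose $\es$ there. The resulting maximal play is thus not winning, contradicting the choice of $\s$.

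For item 2 I would suppose $(s_q,B)\bowtie(t_r,A)=\es$. Since $\es\neq B\subseteq\S_{q,r}$, Lemma~\ref{l:separation}(2) and (3) imply that $\s_q(\view_q(uxy))=B$ (no local action and no communications outside $\S_{q,r}$ are proposed), while $\s_r(\view_r(uxy))=A\subseteq\S_{q,r}$. Thus neither $q$ nor $r$ has a local move, and the only possible synchronizations between them lie in $A\cap B$, none of which is enabled at $(s_q,t_r)$ by assumption. A maximal extension of $uxy$ by actions of the remaining processes again freezes $q$ at $s_q$; since $B\neq\es$ forces $s_q\notin F_q$, this contradicts winningness.

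For the final claim, condition (i) of admissibility is immediate and condition (ii) follows from winningness, since any infinite $r$-local $\s$-play that avoids $F_r$ and never reaches a state where $\s_r$ proposes communication lifts to an infinite, hence losing, maximal $\s$-play. For (iii) I would establish the dichotomy by showing: if some $(t_r,\es)\in\Sync^\s_r(u)$, then every $(t'_r,A')\in\Sync^\s_r(u)$ has $A'=\es$. Given the witness $ux$ with $t_r\in F_r$, extend $ux$ to a maximal (and thus winning) $\s$-play $uxz$; since $r$ is blocked at $t_r$, $z$ contains no $r$-action, so at the first moment at which $q$ enters $F_q$ one obtains a $\s$-play $uy$ with $y\in(\S\setminus\S_r)^*$ and $\state_q(uy)\in F_q$. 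Item 1 then applies at $u$ and forces $\Sync^\s_r(u)$ to be a final plan. The main obstacle I expect is the bookkeeping around trace equivalences: one must verify carefully at each rewriting $uxy\sim_I uyx$ that the relevant process-views and hence the ``is a $\s$-play'' property are preserved, and that the maximal extensions by actions of $\PP\setminus\{q,r\}$ indeed exist as $\s$-plays throughout.
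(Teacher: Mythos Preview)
Your proof is correct and follows the same route as the paper: combine the local-$r$ witness $ux$ for $(t_r,A)$ with the given $uy$ into the single $\s$-play $uyx\sim_I uxy$, then exploit that after this play both $q$ and $r$ are blocked (by Lemma~\ref{l:separation} and the blocking property of final states) to contradict winningness. You go further than the paper by spelling out the ``In particular'' admissibility clause---your reduction of condition~(iii) to item~1, via extending $ux$ to a maximal winning play and projecting out the $r$-local prefix, is sound and a welcome addition that the paper leaves implicit.
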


\begin{proof}
  Take $y$ as in the statement of the lemma and suppose $\state_q(uy)
  \in F_q$. Take $(t_r,A)\in \Sync^\s_r(u)$. By definition this means
  that there is $x\in (\S^{loc}_r)^*$ such that $ux$ is a $\s$-play,
  $\state_r(ux)=t_r$, and $\s_r(\view_r(ux))=A$ with $A\incl
  \S_{q,r}$. Observe that $uyx$ is also a $\s$-play. Hence $t_r$ should
  be final because after $uyx$ process $r$ can do at most
  communication with $q$, but this is impossible since $q$ is in a
  final state. Since $t_r$ is final, it cannot propose an action,
  hence $A=\es$. This shows the first item of the lemma.

  For the second item of the lemma take $y$, $s_q$, $B$, and $(t_r,A)$ as in
  the assumption. Once again we get $x\in (\S^{loc}_r)^*$ such that
  $ux$ is a $\s$-play, $\state_r(ux)=t_r$, and $\s_r(\view_r(ux))=A$
  with $A\incl \S_{q,r}$. Once again $uyx$ is a $\s$-play. We have
  that $s_q$ is not final since $B\not=\es$. As
  $\s$ is winning, the play $uyx$ can be extended by an action of
  $q$. But the only such action that is possible is a communication
  between $q$ and $r$. Since $A$ and $B$ are the communication sets
  proposed by $\s_r$ and $\s_q$, respectively, we must
  have $(s_q,B) \bowtie (t_r,A) \not=\es$.
\end{proof}

\medskip
\textit{The new plant $\Aa'$}.  We are now ready to define the reduced
plant $\Aa'$ that is the result of eliminating process $r$.  Let
$\PP'=\PP\setminus \set{r}$.  We have $\Aa'=\struct{\set{S'_p}_{p\in
    \PP'},s'_{in},\set{\d'_a}_{a\in\S'}}$ where the components will be
defined below.

The states of process $q$ in $\Aa'$ are of one of the following types:
\begin{equation*}
\struct{s_q,s_r}\,,\quad \struct{s_q,T}\,,\quad  \struct{s_q,T,B}  \,,
\end{equation*}
where $s_q \in S_q, s_r\in S_r$, $T \incl S_r\times\Pp(\S_{q,r})$ is
an admissible plan, $B\subseteq \S_{q,r}$. The new initial state for $q$ is
$\struct{(s_{in})_q,(s_{in})_r}$.

For every $p \not= q$, we let $S'_p=S_p$ and $F'_p =F_p$. The local
winning condition for $q$ becomes 
$F'_q = F_q \times F_r \cup
\set{\struct{s_q,T} \mid s_q \in F_q, \text{ and $T$  is a final plan}}$.

The set of actions $\S'$ is $\S \setminus \S_r$, plus
additional local $q$-actions that we introduce below.
All transitions $\d_a$ with $\loc(a) \cap \{q,r\}=\es$ are as in $\Aa$.
Regarding $q$ we have the following transitions:
\begin{enumerate}
\item If not in a final state then process $q$ chooses an admissible
  plan:
  \begin{equation*}
\struct{s_q,s_r} \act{\ch(T)} \struct{s_q,T},    
  \end{equation*}
where $T$ is an admissible plan in $s_r$, and
  $\struct{s_q,s_r} \notin F_q \times F_r$.

\item Local action of $q$:
\[
\struct{s_q,T} \act{a} \struct{s'_q,T}, \quad\text{ if } s_q
\act{a} s'_q \text{ in } \Aa\,.
\]
\item Synchronization between $q$ and $p \not= r$:
\[(\struct{s_q,T},s_p) \act{b} (\struct{s'_q,T},s'_p), \quad\text{if }
  (s_q,s_p) \act{b} (s'_q,s'_p)\, .\]
\item Synchronization between $q$ and $r$. Process $q$ declares the
  communication actions with $r$:
  \begin{equation*}
    \struct{s_q,T} \act{\ch(B)} \struct{s_q,T,B},\qquad \text{ if }\;  
     B \subseteq \S_{q,r}\,
  \end{equation*}
  when $s_q$ is not final,  $T$ is not a final plan, and for every
  $(t_r,A)\in T$ we have $(t_r,A) \bowtie  (s_q,B)\not=\es$.


  Then the environment can choose the target state of $r$ and a
  synchronization action $a \in \S_{q,r}$:
  \begin{equation*}
\struct{s_q,T,B} \act{(a,t_r)} \struct{s'_q,s'_r}\qquad \text{if
  $(s_q,t_r) \act{a} (s'_q,s'_r)$ in $\Aa$}
  \end{equation*}
  for every $(a,t_r)$ such that $(t_r,A)\in T$ for some $A$, and $a\in
  A\cap B$. Notice that the
complicated name of the action $(a,t_r)$ is needed to ensure
that the transition is deterministic.

\end{enumerate}

To summarize the new actions of process $q$ in plant $\Aa'$ are:
\begin{itemize}
\item $\ch(T) \in \Ssys$, for every admissible plan  $T$,
\item $\ch(B) \in \Ssys$, for each $B \subseteq \S_{q,r}$,
\item $(a,t_r) \in \Senv$ for each
  $a \in \S_{q,r},t_r \in S_r$.
\end{itemize}

The proof showing that this construction is correct provides a
translation from winning strategies in $\Aa$ to winning strategies in
$\Aa'$, and back. To this purpose we rely on a
translation from plays in $\Aa$ to plays in $\Aa'$. A (finite or
infinite) play $u$ in $\Aa$ is a trace that will be convenient
to view as a word of the form
\begin{equation*}
u = y_0 x_0a_1\    \cdots\ a_i y_i x_i\ a_{i+1}\dots
\end{equation*}
where for $i\in \Nat$ we have that: $a_i \in \S_{q,r}$ is communication between $q$ and $r$; $x_i
\in (\S^{loc}_r)^*$ is a sequence of local actions of $r$; and $y_i\in (\S
\setminus \S_r)^*$ is a sequence of actions of other processes than $r$. Note that $x_i,y_i$ are
concurrent, for each $i$. We will write $u|_{a_i}$ for the prefix of
$u$ ending in $a_i$. Similarly $u|_{y_i}$ for the prefix ending with 
$y_i$; analogously for $x_i$.

Fix a strategy $\s$ in $\Aa$. With a word $u$ as above we will associate the word
\[  \chi(u)=\ch(T_0)y_0\ch(B_0)(a_1,t^1_r)\, \cdots \,
(a_i,t^i_r)\ch(T_i)\ y_i\ch(B_i)(a_{i+1},t^{i+1}_r)\dots
\]
where for every $i=0,1,\dots$:
\begin{itemize}
\item $T_i=\Sync^\s_r(u|_{a_i})$ and $T_0=\Sync^\s_r(\e)$;
\item $B_i=\s_q(\view_q(u|_{y_i}))$;
\item $t^i_r=\state_r(u|_{x_i})$.
\end{itemize}
We then construct a strategy that plays $\chi(u)$ in $\Aa'$ instead of
$u$ in $\Aa$. In Figure~\ref{fig:chi} we have pictorially represented
which parts of $u$ determine which parts of $\chi(u)$.

\begin{figure*}[tbhf]
  \centering
\begin{tikzpicture}[xscale=1.5]
  \draw (0,1)node{$u =$}
  ++(1.3,0)node(y0){$y_0$} ++(1.2,0)node(x0){$x_0$}
  ++(.5,0)node(a1){$a_1$} ++(1.8,0)node(y1){$y_1$}
  ++(1.3,0)node(x1){$x_1$} ++(.6,0)node(a2){$a_2$};
  \draw (0,0)node{$\chi(u) =$} 
  ++(.8,0)node(T0){$\ch(T_0)$}
  ++(.5,0)node(y0p){$y_0$} ++(.7,0)node(A0){$\ch(B_0)$}
  ++(1,0)node(a1t){$(a_1,t^1_r)$} ++(.8,0)node(T1){$\ch(T_1)$}
  ++(1,0)node(y1p){$y_1$}
  ++(.7,0)node(A1){$\ch(A_1)$}
   ++(1.2,0)node(a2t){$(a_2,t^2_r)$}
   ++(1,0)node(T2){$\ch(T_2)$};
    \draw(y0)--(y0p);
    \draw(y0)--(A0);
    \draw(x0)--(a1t);
    \draw(a1)--(a1t);
    \draw(a1)--(T1);
   \draw(y1)--(y1p);
   \draw(y1)--(A1);
   \draw(x1)--(a2t);
   \draw(a2)--(a2t);
    \draw(a2)--(T2);
\end{tikzpicture}
  \caption{Definition of $\chi(u)$}
  \label{fig:chi}
\end{figure*}
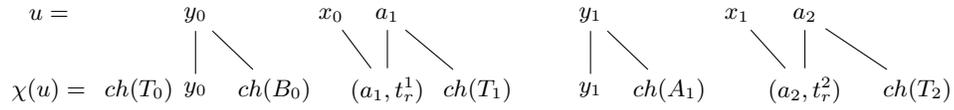

The next lemma follows directly from the definition of the reduction
from $\Aa$ to $\Aa'$.

\medskip

\begin{lemma}\label{lemma:states}
  If $u$ ends in a letter from $\S_{q,r}$ then we have the following
  \begin{itemize}
  \item $\state_q(\chi(u))=\struct{\state_q(u),\state_r(u)}$.
  \item $\state_p(\chi(u)y)=\state_p(uy)$ for every $p\not=q$ and
    $y\in (\S \setminus \S_{q,r})^*$.
  \item $\state_q(\chi(u)\ch(T)y)=\struct{\state_q(uy),T}$ for every $y\in (\S \setminus \S_{q,r})^*$.
  \item $\state_q(\chi(u)\ch(T)y\ch(B))=\struct{\state_q(uy),T,B}$ for every $y\in (\S \setminus \S_{q,r})^*$.
  \end{itemize}
\end{lemma}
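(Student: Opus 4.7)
The plan is to prove the four items simultaneously by induction on the number $k \ge 1$ of synchronization letters $a_i \in \S_{q,r}$ appearing in $u = y_0 x_0 a_1 \cdots a_k y_k x_k a_{k+1}\cdots$, unraveling $\chi(u)$ block by block and firing the four families of transitions introduced for $\Aa'$.

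For the base case $k=1$, so $u = y_0 x_0 a_1$, I would start from the initial state $\struct{(s_{in})_q,(s_{in})_r}$ and execute $\chi(u) = \ch(T_0)\,y_0\,\ch(B_0)\,(a_1,t^1_r)$ piece by piece. The letter $\ch(T_0)$ fires a type-(1) transition to $\struct{(s_{in})_q,T_0}$; the word $y_0 \in (\S\setminus \S_r)^*$ uses only type-(2) and type-(3) transitions and the unchanged transitions of processes other than $q$, bringing $q$ to $\struct{\state_q(y_0),T_0}$ and each $p \ne q$ to $\state_p(y_0)$; then $\ch(B_0)$ fires a type-(4) declaration to $\struct{\state_q(y_0),T_0,B_0}$; and finally $(a_1,t^1_r)$, using $t^1_r = \state_r(x_0)$, lands in $\struct{s'_q,s'_r}$ with $(s'_q,s'_r) = \d_{a_1}(\state_q(y_0),\state_r(x_0))$. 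Since $y_0$ and $x_0$ are concurrent, trace equivalence yields $(s'_q,s'_r) = (\state_q(u),\state_r(u))$, which is item~1. Items~2--4 follow by stopping the same unfolding one, two, or three steps earlier. The inductive step is structurally identical: assuming item~1 for $u|_{a_k}$, I would start from $\struct{\state_q(u|_{a_k}),\state_r(u|_{a_k})}$ and replay the argument on the suffix $\ch(T_k)\,y_k\,\ch(B_k)\,(a_{k+1},t^{k+1}_r)$ of $\chi(u)$, again using trace equivalence to decouple the $q$- and $r$-evolutions.

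The main obstacle, and the only step not reducible to pure symbol-pushing, will be checking that each transition fired in $\Aa'$ is actually enabled; this relies crucially on Lemma~\ref{l:rfuture}. Concretely, $T_k = \Sync^\s_r(u|_{a_k})$ is an admissible plan in $\state_r(u|_{a_k})$ by Lemma~\ref{l:rfuture}, so the type-(1) transition $\ch(T_k)$ exists; the declaration $\ch(B_k)$ with $B_k = \s_q(\view_q(u|_{y_k}))$ is enabled because Lemma~\ref{l:rfuture}(2) forces $(t_r,A) \bowtie (s_q,B_k) \ne \es$ for every $(t_r,A) \in T_k$; and the environment move $(a_{k+1},t^{k+1}_r)$ is enabled because $(t^{k+1}_r,\s_r(\view_r(u|_{x_k}))) \in T_k$ by definition of $\Sync^\s_r$, with $a_{k+1}$ lying in the intersection of $\s_r(\view_r(u|_{x_k}))$ and $B_k$ since $u$ is a $\s$-play whose next letter is $a_{k+1}$.
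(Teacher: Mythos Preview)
Your block-by-block induction is exactly how one makes rigorous what the paper dismisses in one line (``follows directly from the definition of the reduction''), and the base-case computation you give for item~1 is correct, including the appeal to trace equivalence for $(s'_q,s'_r)=(\state_q(u),\state_r(u))$. Two points need fixing, though.

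First, your sentence ``Items~2--4 follow by stopping the same unfolding one, two, or three steps earlier'' is backwards. Items~2--4 are about \emph{extensions} $\chi(u)y$, $\chi(u)\ch(T)y$, $\chi(u)\ch(T)y\ch(B)$ of $\chi(u)$, not about prefixes. The intermediate states you compute while unrolling $\chi(u)$ (e.g.\ $\struct{\state_q(y_0),T_0,B_0}$ after $\ch(T_0)y_0\ch(B_0)$) have the right \emph{shape}, but they are not instances of items~3--4 for the $u$ at hand. The correct argument is: once item~1 gives $\state_q(\chi(u))=\struct{\state_q(u),\state_r(u)}$, items~2--4 follow by \emph{continuing} forward---fire a type-(1) transition $\ch(T)$, then $y$ via type-(2)/(3) transitions, then $\ch(B)$ via type-(4)---and item~2 for $p\neq q$ holds because the new $q$-local letters $\ch(T),\ch(B),(a_i,t^i_r)$ do not touch $p$.

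Second, your third paragraph is doing work that the paper deliberately postpones to Lemma~\ref{lemma:sigma to sigma prim}. Lemma~\ref{lemma:states} is pure state bookkeeping: \emph{if} the indicated transitions of $\Aa'$ fire, \emph{then} the target states are as claimed, by inspection of the four transition rules. Whether $\chi(u)$ is a valid run of $\Aa'$ at all---admissibility of $T_i$, the $\bowtie$ side-condition on $\ch(B_i)$, membership $(t^{i+1}_r,A)\in T_i$---is precisely what the inductive proof of Lemma~\ref{lemma:sigma to sigma prim} establishes, and \emph{that} is where Lemma~\ref{l:rfuture} enters. Folding those enabledness checks into Lemma~\ref{lemma:states} is not incorrect, but it forces you to add the hypotheses ``$\s$ winning and satisfying Lemma~\ref{l:separation}'' that the paper's statement of Lemma~\ref{lemma:states} does not carry, and it blurs the separation between the syntactic translation $\chi$ and the semantic correctness of $\s'$.
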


\medskip
\noindent\textit{From $\s$ in $\Aa$ to $\s'$ in $\Aa'$.}
We are now ready to define $\s'$ from a winning strategy $\s$. We
assume that $\s$ satisfies the property stated in
Lemma~\ref{l:separation}. We will define $\s'$ only for certain plays and
then show that this is sufficient. 

Consider $u'$ such that $u'=\chi(u)$ for some $\s$-play $u$ ending in
a letter from $\S_{q,r}$. We have:
\begin{itemize}
\item If $\state_q(u') \notin F_q$ then $\s'_q(\view_q(u'))=\set{\ch(T)}$
  where $T=\Sync^\s_r(u)$.
\item For every process $p\not=q$ we put
  $\s'_p(\view_p(u'\ch(T)y))=\s_p(\view_p(uy))$ for $y\in
  (\S \setminus \S_{q,r})^*$.
\item For $y\in (\S
  \setminus \S_{q,r})^*$ and $B=\s_q(\view_q(uy))$ we define
  \begin{equation*}
    \s'_q(\view_q(u'\ch(T)y))=
    \begin{cases}
      B & \text{if $B\cap \S_{q,r}=\es$} \\
      \set{\ch(B)} & \text{if $B\incl \S_{q,r}$}  
    \end{cases}
  \end{equation*}
\item $\s_q'(\view_q(u'\ch(T)y\ch(B)))=\es$.
\end{itemize}
Observe that in the last case the strategy proposes no move as there
are only moves of the environment from a position reached on a play of
this form.

The next lemma states the correctness of the construction.

 \begin{lemma}\label{lemma:sigma to sigma prim}
  If $\s$ is a winning strategy for $\Aa,(F_p)_{p \in\PP}$ then $\s'$
  is a winning strategy for $\Aa',(F'_p)_{p \in \PP'}$.
\end{lemma}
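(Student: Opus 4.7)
The plan is to prove the lemma by setting up a tight correspondence between $\s$-plays of $\Aa$ and $\s'$-plays of $\Aa'$ through the map $\chi$, and then transferring the winning property. The correspondence is forced once we know which states are reached at each point, which is exactly Lemma~\ref{lemma:states}.

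First, I would show by induction on length that every finite $\s'$-play of $\Aa'$ is of exactly one of the forms $\chi(u)$, $\chi(u)\ch(T)y$, or $\chi(u)\ch(T)y\ch(B)$, where $u$ is a $\s$-play of $\Aa$ ending in a letter from $\S_{q,r}$ (or empty), $y \in (\S \setminus \S_{q,r})^*$, $uy$ is a $\s$-play, $T = \Sync^\s_r(u)$, and $B = \s_q(\view_q(uy)) \subseteq \S_{q,r}$. The inductive step uses Lemma~\ref{lemma:states} to read off the current $\Aa'$-state of each process, the definition of $\s'$ to reproduce $\s$'s moves on the non-$r$ processes, and Lemma~\ref{l:rfuture} to verify the side conditions attached to the $\ch(T)$ and $\ch(B)$ transitions — in particular that $T$ is an admissible plan and that $(t_r,A) \bowtie (s_q,B) \neq \es$ for every $(t_r,A) \in T$. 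The environment moves $(a,t_r)$ in $\Aa'$ are matched in $\Aa$ by first running the witness $x \in (\S^{loc}_r)^*$ that realises the pair $(t_r,A) \in T = \Sync^\s_r(u)$ as provided by Definition~\ref{d:outcome}, and then performing the communication $a \in A \cap B$.

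Next, I would check that maximal $\s'$-plays come from maximal $\s$-plays. If $u'$ is maximal it cannot have the intermediate shapes $\chi(u)\ch(T)y$ or $\chi(u)\ch(T)y\ch(B)$ with room to move: in the first shape either $\s'_q$ offers $\ch(B)$ or a local/non-$r$-communication coming from $\s$ (using Lemma~\ref{l:separation}(3) to split cleanly on whether $\s_q$ proposes $\S_{q,r}$-actions or not), and in the second shape the environment can supply some $(a,t_r)$ thanks to Lemma~\ref{l:rfuture}(2). Hence $u' = \chi(u)$ for a $\s$-play $u$ ending in $\S_{q,r}$, or $u' = \chi(u)\ch(T)y$ where $uy$ is a $\s$-play in which $q$ never re-synchronises with $r$; in both cases the associated $\s$-play is maximal in $\Aa$.

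Finally, I would verify the winning condition on such a maximal $u'$. For $p \neq q$, Lemma~\ref{lemma:states} gives $\state_p(u') = \state_p(u) \in F_p = F'_p$. For $q$, either $u' = \chi(u)$ and $\state_q(u') = \struct{\state_q(u),\state_r(u)} \in F_q \times F_r \subseteq F'_q$ because $u$ is winning, or $u' = \chi(u)\ch(T)y$ and $\state_q(u') = \struct{\state_q(uy),T}$ with $\state_q(uy) \in F_q$; Lemma~\ref{l:rfuture}(1) then forces $T = \Sync^\s_r(u)$ to be a final plan, so $\state_q(u') \in F'_q$. This exhausts the cases and shows that $\s'$ is winning. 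The main obstacle is the maximality transfer across the extra bookkeeping moves $\ch(T)$ and $\ch(B)$: one must rule out $\s'$-plays that get stuck in those intermediate commitment positions, which relies essentially on the structural shape of $\s$ given by Lemma~\ref{l:separation} together with the non-blocking guarantee from Lemma~\ref{l:rfuture}(2).
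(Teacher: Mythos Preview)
Your overall architecture is the same as the paper's: classify every $\s'$-play as $\chi(u)$, $\chi(u)\ch(T)y$, or $\chi(u)\ch(T)y\ch(B)$, rule out the last shape for maximal plays, and read off the winning condition via Lemma~\ref{lemma:states}. The induction and the use of Lemma~\ref{l:rfuture}(2) to enable $\ch(B)$ are fine.

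The gap is in the maximality transfer for the shape $u'=\chi(u)\ch(T)y$. You assert that ``the associated $\s$-play is maximal in $\Aa$'', but $uy$ is in general \emph{not} maximal: process $r$ is still in $\state_r(u)$ and may have local $\s$-moves available. So you cannot conclude $\state_p(uy)\in F_p$ for $p\neq q$ just from $\s$ being winning. The paper closes this as follows. First argue $\state_q(uy)\in F_q$: otherwise, by Lemma~\ref{l:separation} and the fact that local and non-$r$ communications of $q$ are blocked, $\s_q(\view_q(uy))=B\subseteq\S_{q,r}$ is nonempty, and Lemma~\ref{l:rfuture}(2) makes the $\ch(B)$ transition of $\Aa'$ enabled, contradicting maximality of $u'$. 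Only then does Lemma~\ref{l:rfuture}(1) yield that $T$ is a final plan, which provides some $x\in(\S^{loc}_r)^*$ with $\state_r(uyx)\in F_r$. Now $uyx$ is a maximal $\s$-play (both $q$ and $r$ are final, the remaining processes were already stuck after $uy$), hence winning, and this is what gives $\state_p(uy)=\state_p(uyx)\in F_p$ for every $p\neq r$. In your outline you invoke Lemma~\ref{l:rfuture}(1) only to place the $q$-component $\struct{\state_q(uy),T}$ into $F'_q$; you must also use it to manufacture the extension $x$ that makes the $\Aa$-play maximal. (A minor slip in the same paragraph: Lemma~\ref{lemma:states} gives $\state_p(u')=\state_p(uy)$, not $\state_p(u)$.)
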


\begin{proof}
  We will show inductively that for every $\s'$-play $u'$ ending 
  in a letter of the form $(a',t'_r)$ there is a $\s$-play
  $u$ such that $u'=\chi(u)$. Then we will show that
  every maximal $\s'$-play is winning.
  
  We start with the induction step, later we will explain how to
  do the induction base. Let us take $u'=\chi(u)$ as in the induction
  hypothesis. By Lemma~\ref{lemma:states} we have
  $\state_q(u')=\struct{\state_q(u),\state_r(u)}$.   

  Consider a possible, $\s'$-compatible, extension of $u'$ till the next letter
  $(a,t_r)$. It is of the form $u'\ch(T)y\ch(B)(a,t_r)$ where $y\in
  (\S \setminus \S_r)^*$.  We will show that it is of the form
  $\chi(uyxa)$ for some $x\in (\S_r^{loc})^*$,
  and that $uyxa$ is a $\s$-play.
  \begin{itemize}
  \item By definition of the automaton $\Aa'$ and the strategy $\s'$
    we have $\s'(u')=\set{\ch(T)}$ with $T=\Sync^\s_r(u)$.
  \item Since $\s'$ is the same as $\s$ on actions from $\S \setminus
    \S_r$, we get that $uy$ is a $\s$-play.

  \item Concerning $\ch(B)$, by the definition of $\s'$ we have that
    $B=\s_q(\view_q(uy))$. Then by the definition of $\Aa'$ we get
    some $A$ such that $(t_r,A)\in T$, and $a\in (t_r,A)\bowtie (s_q,B)$
    with $s_q=\state_q(uy)$. As $T=\Sync^\s_r(u)$ we can find $x\in
    (\S^{loc}_r)^*$ such that $ux$ is a $\s$-play, $\state_r(ux)=t_r$
    and $\s_r(ux)=A$. We get that $uyxa$ is a $\s$-play with
    $\chi(uyxa)=u'\ch(T)y\ch(B)(a,t_r)$, and we are done.
  \end{itemize}
The induction base is exactly the same as the induction step 
taking $u'$ and $u$ to be the empty sequence. 

To finish the lemma we need to show that every maximal $\s'$-play 
is winning. For this we examine all possible situations where
such a play can end. We consider plays $u'$ and $u$ as at the
beginning of the lemma.

If $u'$ itself is maximal then $\state_q(u')$ is final because
otherwise $\ch(T)$ would be possible. Hence, by
Lemma~\ref{lemma:states} $\state_q(u)$ and $\state_r(u)$ are
final. Since $\s$ and $\s'$ are the same on processes other than $q$
and $r$, no action $a$ with $\loc(a)\cap \set{q,r}=\es$ is possible
from $u$. It follows that $u$ is a maximal $\s$-play. Since $\s$ is
winning, $\state_p(u)$ is final for every process $p$. By
Lemma~\ref{lemma:states}, $u'$ is winning too.

Suppose now that $u'\ch(T)y$ is maximal for some
$y\in(\S\setminus\S_{r})^*$. By the same reasoning as above there is no
$\s$-play extending $uy$ by an action from $\S \setminus\S_r$. We have two cases
\begin{itemize}
\item If $\state_q(uy)$ is final then $T$ is a final plan by
  Lemma~\ref{l:rfuture}. So there is $x\in(\S^{loc}_r)^*$ such that
  $\state_r(uyx)$ is final. Then $uyx$ is a maximal $\s$-play. Since
  $\s$ is winning, after $uyx$ all processes are in the final state. By
  Lemma~\ref{lemma:states}, $u'\ch(T)y$ is winning too.
\item If $state_q(uy)$ is not final then $\s(uy)\incl
  \S_{q,r}\not=\es$ since $\s$ is assumed to satisfy
  Lemma~\ref{l:separation}, and communication with other processes
  than $r$ is not possible. By Lemma~\ref{l:rfuture} $T$ cannot be
  final and action $\ch(B)$ for $B=\s(uy)$ is possible according to
  $\s'$. A contradiction.
\end{itemize}

A play of the form $u'ch(T)y\ch(B)$ cannot be maximal since some local
actions of the form $(a,t_r)$ are always possible. This covers all the
cases and completes the proof.
\end{proof}

\medskip
\noindent\textit{From $\s'$ in $\Aa'$ to $\s$ in $\Aa$.}
From a strategy $\s'=(\s'_p)_{p \in \PP'}$ for $\Aa'$ we define a
strategy $\s=(\s_p)_{p \in \PP}$ for $\Aa$. We assume that $\s'$
satisfies Lemma~\ref{l:separation}. We consider $u$ ending in an
action from $\S_{q,r}$ such that $\chi(u)$ is a $\s'$-play. First, for
every $p\not=q,r$ and every $y\in (\S \setminus \S_r)^*$ we set
\begin{equation*}
  \s_p(\view_p(uy))=\s'_p(\view_p(\chi(u)y)).
\end{equation*}
If $\state_q(\chi(u))$ is not final then $\s'(\chi(u))=\set{\ch(T)}$
for some admissible plan $T$ in state $\state_r(\chi(u))$. This means
that $T=\Sync^\r_r(u)$ for some strategy $\r$. In this case:
\begin{itemize}
\item for every $x\in(\S^{loc}_r)^*$ we set $\s_r(ux)=\r_r(ux)$;

\item for every $y\in (\S \setminus \S_r)^*$ we consider
  $X=\s'_q(\view_q(\chi(u)\ch(T)y))$ and set
  \begin{equation*}
    \s_q(\view_q(uy))=
    \begin{cases}
      B & \text{if $X=\set{\ch(B)}$}\\
      X & \text{otherwise}
    \end{cases}
  \end{equation*}
\end{itemize}

\medskip

\begin{lemma}\label{lemma:sigma prim to sigma}
  If $\s'$ is a winning strategy for $\Aa',(F'_p)_{p \in\PP'}$ then $\s$
  is a winning strategy for $\Aa,(F_p)_{p \in \PP}$.
\end{lemma}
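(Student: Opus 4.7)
The plan is to mirror the proof of Lemma~\ref{lemma:sigma to sigma prim} in the reverse direction. The central inductive claim is that for every $\s$-play $u$ of $\Aa$ ending in a letter of $\S_{q,r}$ (or the empty play), $\chi(u)$ is a $\s'$-play of $\Aa'$. A key observation, immediate from the construction of $\s_r$ via the witness strategy $\r$ for the admissible plan $T$ prescribed by $\s'_q$ at $\chi(u)$, is that $\Sync^\s_r(u)=\Sync^\r_r(u)=T$. For the induction step, consider an extension $u\to uyxa$ with $y\in(\S\setminus\S_r)^*$, $x\in(\S^{loc}_r)^*$ and $a\in\S_{q,r}$; the corresponding $\chi$-extension appends $y\,\ch(B)\,(a,t_r)\,\ch(T')$ to $\chi(u)$. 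Lemma~\ref{lemma:states} together with the definition of $\s$ from $\s'$ shows that each letter is $\s'$-compatible: the letters of $y$ because $\s_p$ agrees with $\s'_p$ through $\chi$ for $p\neq q,r$ and $\s_q$ mirrors $\s'_q$ on non-$r$ synchronizations; $\ch(B)$ because $\s_q(\view_q(uy))=B$ by construction and the $\Aa'$-enabling of $\ch(B)$ reduces to $\s_r$-properties already captured by $T$; $(a,t_r)$ because it is an environment action whose preconditions hold in $T$; and the trailing $\ch(T')$ because $\s'_q$ prescribes the next plan at the updated state.

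The main work is showing that every maximal $\s$-play $u^*$ is winning. Take $u^*$ finite and decompose it as $u^*\sim uyx$ with $u$ the longest prefix ending in $\S_{q,r}$ (or empty), $y\in(\S\setminus\S_r)^*$, $x\in(\S^{loc}_r)^*$. Set $u':=\chi(u)y$; by the inductive claim together with the $\s$--$\s'$ agreement on non-$r$ processes, $u'$ is a $\s'$-play. Assuming $\s'$ satisfies Lemma~\ref{l:separation}, I argue by case analysis that $\s'_q(\view_q(u'))$ must be empty. A local $\Aa$-action $a$ of $q$ would give $\s_q(\view_q(uy))=\{a\}$ and make $u^*a$ a $\s$-extension of $u^*$; a set $X\subseteq\S^{com}_q\setminus\S_{q,r}$ of non-$r$ synchronizations would give $\s_q(\view_q(uy))=X$, and any $a\in X$ enabled in $\Aa'$ would be enabled in $\Aa$ for the same reasons. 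The critical case is $\s'_q(\view_q(u'))=\{\ch(B)\}$: the $\Aa'$-enabling of $\ch(B)$ forces $T$ non-final and $(t_r,A)\bowtie(s_q,B)\neq\es$ for every $(t_r,A)\in T$; maximality of $u^*$ combined with Lemma~\ref{l:separation} gives $\s_r(\view_r(u^*))=A\subseteq\S_{q,r}$, and admissibility of $T$ (clause (ii)) together with $T$ non-final forces $A\neq\es$ and $(\state_r(u^*),A)\in T$; the $\bowtie$-condition then yields an action $a\in A\cap B$ enabled at $(\state_q(u^*),\state_r(u^*))$ extending $u^*$, contradicting maximality.

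Hence $u'$ is a maximal $\s'$-play, and therefore winning: $\state_q(u')=\struct{\state_q(uy),T}\in F'_q$ forces $\state_q(u^*)\in F_q$ and $T$ a final plan, which in turn (by the definition of a final plan applied to the $\s_r$-local continuation $ux$) forces $\state_r(u^*)\in F_r$; and Lemma~\ref{lemma:states} provides $\state_p(u^*)\in F_p$ for $p\neq q,r$. An infinite maximal $\s$-play is ruled out by an analogous argument: its $\chi$-image is either an infinite $\s'$-play, contradicting winning of $\s'$, or its finite prefixes fall into the cases above via admissibility, which bounds the number of $r$-local actions between consecutive synchronizations. The main obstacle is the $\ch(B)$ case, where admissibility of $T$ must be exploited to convert the abstract $\Aa'$-enabling condition for $\ch(B)$ into a concrete $q$-$r$-synchronization available in $\Aa$ after $u^*$.
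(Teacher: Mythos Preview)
Your approach is essentially the paper's: the same inductive claim that $\chi(u)$ is a $\s'$-play for every $\s$-play $u$ ending in $\S_{q,r}$, followed by a case analysis on maximal plays. There is one recurring technical slip you should fix. In $\Aa'$ the state reached after $\chi(u)$ is of the form $\struct{s_q,s_r}$, and from such a state the only enabled $q$-action is $\ch(T)$; in particular $\chi(u)y$ is \emph{not} a play of $\Aa'$ when $y$ contains a $q$-action. The correct object to look at is $u'=\chi(u)\ch(T)y$ (with $T$ the plan prescribed by $\s'_q$ at $\chi(u)$), and correspondingly the $\chi$-extension for the inductive step is $\ch(T)\,y\,\ch(B)\,(a,t_r)$, not $y\,\ch(B)\,(a,t_r)\,\ch(T')$. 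This also means you must separately handle the case where $\state_q(\chi(u))\in F'_q$ (so no $\ch(T)$ is prescribed): here both $\state_q(u)$ and $\state_r(u)$ are final, $x$ is empty, $y$ contains no $q$-action, and $\chi(u)y$ is directly a maximal $\s'$-play.

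Your case analysis on $\s'_q(\view_q(u'))$ is a clean reorganisation of the paper's split (first on whether $\ch(T)$ is possible, then on whether $s_r$ is final); the substance is the same. In the $\ch(B)$ case your use of clause~(ii) of admissibility is the right move: since no local $r$-action extends $u^*$ and $T$ is non-final, the pair $(\state_r(u^*),A)$ lands in $T$ with $A\neq\es$, and the $\bowtie$-condition produces the forbidden extension. One caution on your infinite-play sentence: clause~(ii) guarantees that every $\r$-play of $r$ \emph{reaches} a state where only communication (or nothing) is proposed, which does exclude an infinite tail of purely local $r$-actions; but phrase this via the definition of $\Sync^\r_r$ (the proposal set must become a subset of $\S_{q,r}$) rather than as a literal bound on the number of steps.
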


\medskip

\begin{proof}
  Suppose that $u$ is $\s$-play ending in an action from
  $\S_{q,r}$ and such that $\chi(u)$ is a $\s'$-play. We first show
  that for every extension of $u$ to a $\s$-play $uyxa$ with $y\in
  (\S \setminus \S_r)^*$, $x\in(\S^{loc}_r)^*$, and $a\in \S_{q,r}$, its image
  $\chi(uyxa)$ is a $\s'$-play. Then we will show that
  every maximal $\s$-play is winning.

  Take $uyxa$. By Lemma~\ref{lemma:states} $\state_q(\chi(u))$ is not
  final, so we have $\s'(\chi(u))=\set{\ch(T)}$. Then
  $T=\Sync^\s_r(u)$ by the definition of $\s$. Again directly from the
  definition we have that $\chi(u)\ch(T)y$ is a $\s'$-play. By
  definition of $\s$ we have then that $\chi(u)\ch(T)y\ch(B)$ is a
  $\s'$-play for $B=\s_q(\view_q(uy))$. Finally, we need to see why
  $(a,t_r)$ with $t_r=\state_r(ux)$ is possible. Since
  $T=\Sync^\s_r(u)$ we get that $(t_r,\s_r(\view_r(ux)))\in T$. Then
  $a\in\s_r(\view_r(ux))\cap B$, and in consequence
  $\chi(u)\ch(T)y\ch(B)(a,t_r)$ is possible by
  Lemma~\ref{lemma:states} and the definition of $\Aa'$.

  It remains to verify that every maximal $\s$-play is winning. Consider a
  maximal $\s$-play $uyx$ where $u$ ends in an action from $\S_{q,r}$,
  $x\in(\S^{loc}_r)^*$, and $y\in(\S \setminus\S_r)^*$ (this includes the cases
  when $x$, or $y$ are empty). We look at $\chi(u)$ and consider two
  situations:

  \begin{itemize}
  \item If no $\ch(T)$ is possible from $\chi(u)$ then
    $\state_q(\chi(u))$ is final. This means that $x$ is empty and
    $\state_q(u)$ and $\state_r(u)$ are both final. It is then clear
    that $\chi(u)y$ is a maximal $\s'$-play. Since $\s'$ is winning,
    every process is in a final state. So $uy$ is a winning play in
    $\Aa$.
  \item If $\chi(u)\ch(T)$ is a $\s'$-play for some $T$ then again we
    have two cases:
    \begin{itemize}
    \item If $s_r=\state_r(uyx)$ is final then $(s_r,\es)\in T$ by
      the definition of $\s$. As $T$ is an admissible plan, $T$ is
      final. After $\chi(u)y$ no action other than $\ch(B)$ is
      possible. But $\ch(B)$ is not possible either since $T$ is
      final. Hence $\chi(u)y$ is a maximal $\s'$-play. So all the
      states reached on $\chi(u)y$ are final. By
      Lemma~\ref{lemma:states} we deduce the same for $uyx$, hence
      $uyx$ is winning.
    \item If $s_r$ is not final then $\s_r(\view_r(uyx))=A\incl \S_{q,r}$ for
      $A\not=\es$ (local actions of $r$ are not possible, since $uyx$
      is maximal). Hence $(s_r,A)\in T$, and $T$ is not final. This
      means that $s_q=\state_q(\chi(u)\ch(T)y)$ is not final. So it is
      possible to extend the $\s'$-play with an action of the form $\ch(B)$. But
      by the definition of $\Aa'$ we have $(s_q,B) \bowtie
      (s_r,A)\not=\es$. Hence 
      $uyx$ can be extended by a communication between $q$ and $r$ on
      a letter from $B\cap A$; a contradiction.
    \end{itemize}
  \end{itemize}
\end{proof}

\medskip

Together, Lemmas~\ref{lemma:sigma to sigma prim} and~\ref{lemma:sigma prim to
  sigma} show Theorem~\ref{th:reduction}.

\begin{theorem}\label{th:reduction}
  Let $r$ be the fixed leaf process with $\PP'=\PP\setminus \set{r}$
  and $q$ its parent.  Then the system has a winning strategy for $\Aa,(F_p)_{p\in\PP}$ iff it
  has one for $\Aa',(F'_p)_{p\in\PP'}$. All the components of $\Aa'$
  are identical to those of $\Aa$, apart that for the process $q$. The
  size of $q$ in $\Aa'$ is $\Oo(M_q2^{M_r |\S_{qr}|})$, where $M_q$ and
  $M_r$ are the sizes of processes $q$ and $r$ in $\Aa$, respectively.
\end{theorem}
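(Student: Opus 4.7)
The equivalence of winnability on $(\Aa,(F_p)_{p\in\PP})$ and on $(\Aa',(F'_p)_{p\in\PP'})$ is an immediate consequence of the two translation lemmas just proved. For the forward direction I would take any winning $\s$ on $\Aa$, invoke Lemma~\ref{l:separation} to put it into the three-clause normal form required by Lemma~\ref{lemma:sigma to sigma prim}, and then apply the latter to produce a winning $\s'$ on $\Aa'$. The backward direction is symmetric: normalize the given $\s'$ by Lemma~\ref{l:separation} applied over $\Aa'$, then feed it into Lemma~\ref{lemma:sigma prim to sigma}. The three normalization conditions transfer verbatim to $\Aa'$, because $r$ has been absorbed into $q$ but the system/environment partition on the other processes is preserved, and the proposed actions $\ch(T)$, $\ch(B)$ that appear only in $\Aa'$ are system-controlled by construction.

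The structural claim that $\Aa'$ agrees with $\Aa$ on every $p\neq q$ can simply be read off the construction: state sets, winning sets, and transition functions of every other process are copied verbatim, and only $q$ is enlarged by the three new state shapes $\struct{s_q,s_r}$, $\struct{s_q,T}$, $\struct{s_q,T,B}$ and the auxiliary actions $\ch(T)$, $\ch(B)$, $(a,t_r)$.

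The quantitative part is where some care is needed, and I expect it to be the main obstacle. The $q$-states of the first and third shape contribute only $M_qM_r$ and an extra factor $2^{|\S_{qr}|}$ respectively, so the whole estimate reduces to bounding the number of admissible plans $T$. A literal count over subsets of $S_r\times\Pp(\S_{qr})$ yields only the much weaker doubly exponential $2^{M_r\cdot 2^{|\S_{qr}|}}$. To repair this, I would observe that a strategy $\s_r$ witnessing admissibility of $T$ is asked only to solve a two-player reachability game on $r$'s local arena (reach a final $r$-state, or a state where a communication set is proposed), so by memoryless determinacy of reachability games one may assume $\s_r$ is positional on the local subplay from $s_r$. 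Each positional $\s_r$ then produces a $T$ that is a partial function $S_r\rightharpoonup\Pp(\S_{qr})$, and there are at most $(1+2^{|\S_{qr}|})^{M_r}=\Oo(2^{M_r|\S_{qr}|})$ such functions. Restricting $\Aa'$ to $q$-states carrying these canonical plans does not break Lemmas~\ref{lemma:sigma to sigma prim} and~\ref{lemma:sigma prim to sigma}, since both lemmas use $T$ only through the synchronization relation it encodes; in the $\s\mapsto\s'$ direction one additionally positionalises $\s_r$ before forming $\chi(u)$, while the $\s'\mapsto\s$ direction is unaffected. Combining the three state-shape counts then yields the announced bound $\Oo(M_q\,2^{M_r|\S_{qr}|})$ for the size of process $q$ in $\Aa'$.
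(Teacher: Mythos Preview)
Your proposal is correct and matches the paper's approach exactly: the equivalence is immediate from Lemmas~\ref{lemma:sigma to sigma prim} and~\ref{lemma:sigma prim to sigma}, and for the size bound the paper likewise restricts admissible plans to partial functions $S_r\rightharpoonup\Pp(\S_{qr})$ (stated only as a remark after the theorem, without proof). You actually go a step further than the paper by sketching why this restriction is sound via positional local strategies for $r$; the paper asserts the restriction but does not justify it.
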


\begin{remark} Note that the bound on $|\Aa'|$ is better than $|\Aa| +
\Oo(M_r 2^{M_\ell 2^{|\S_{r\ell}|}})$ obtained by simply counting all
possible states in the description above. 
The reason is that we can restrict admissible plans to be (partial)
functions from $S_\ell$ into $\Pp(\S_{r,\ell})$. 
That is, we do not need to consider different
sets of communication actions for the same state in $S_\ell$.
\end{remark}

Let us reconsider the example from Figure~\ref{fig:server-client} of a
server with $k$ clients. Applying our reduction $k$ times we reduce
out all the clients and obtain the single process plant whose size is
$M_s2^{(M_1+\dots +M_k)c}$ where $M_s$
is the size of the server, $M_i$ is the size of client $i$, and $c$ is
the maximal number of communication actions between a client and the
server.

\begin{theorem}\label{th:main}
  The control problem for distributed alphabets with acyclic communication
  graph is decidable. There is an algorithm for solving
  the problem (and computing a finite-state controller, if it exists) whose
  working time is bounded by a tower of exponentials of height equal
  to half of the diameter of the graph. 
\end{theorem}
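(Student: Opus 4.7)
The plan is to iterate the reduction from Theorem~\ref{th:reduction} until only one process remains. Starting with the plant $\Aa$ on the acyclic communication graph $\CG$, I would repeatedly pick a leaf $r$ of the current graph together with its unique neighbour $q$, and replace the current control problem by the equivalent one on the reduced plant $\Aa'$ over $\PP \setminus \{r\}$. After $|\PP|-1$ rounds a single process remains, and the resulting problem is a two-player reachability game on a finite sequential automaton, which is decidable in polynomial time. Correctness follows by induction from Theorem~\ref{th:reduction}, and the strategy translation given in Lemma~\ref{lemma:sigma prim to sigma} also yields an explicit finite-state controller whenever one exists.

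The core of the proof is the complexity analysis. First I would root the tree $\CG$ at a centre, so that its height equals $h = \lceil d/2 \rceil$, where $d$ is the diameter of $\CG$. Then eliminations are performed bottom-up, always picking a leaf of maximum depth. By Theorem~\ref{th:reduction}, eliminating a leaf $r$ with parent $q$ leaves all processes other than $q$ unchanged and multiplies the size of $q$ by a factor of $2^{M_r|\S_{qr}|}$. The crucial observation is that if $q$ has children $r_1,\dots,r_k$ that are eliminated one after the other, then after processing all of them the size of $q$ is $\Oo(M_q\cdot 2^{M_{r_1}|\S_{qr_1}|+\cdots+M_{r_k}|\S_{qr_k}|})$, a \emph{single} exponential in the total size of its former children. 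Hence all siblings at a common level contribute only one exponential to their parent.

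It follows that the tower of exponentials grows by exactly one level each time the elimination front moves up one level of the tree. After $h$ such rounds the single remaining process has size bounded by a tower of exponentials of height $h = \lceil d/2 \rceil$, and the concluding reachability game is solved in time polynomial in that size. The main subtlety—and the reason for rooting at a centre rather than at an arbitrary leaf—is precisely this bookkeeping: one has to argue that the exponentials compound across tree levels but not across siblings, which relies on the exact form of the bound in Theorem~\ref{th:reduction}, refined as in the remark following it so that admissible plans are treated as partial functions $S_r\to\Pp(\S_{qr})$.
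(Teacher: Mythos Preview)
Your proposal is correct and follows essentially the same approach as the paper: the paper does not give a separate formal proof of this theorem but presents it as the result of iterating Theorem~\ref{th:reduction}, noting beforehand that representing $\CG$ as a tree of depth~$l$ and applying the reduction iteratively yields an $l$-fold exponential algorithm. Your write-up makes explicit two points the paper only hints at---rooting at a centre so that the depth equals $\lceil d/2\rceil$, and the fact that eliminating all children of a node costs a single exponential (which the paper illustrates via the server/client example $M_s2^{(M_1+\dots+M_k)c}$)---but the underlying argument is the same.
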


Our reduction algorithm  can be actually used to compute a (finite-state)
distributed controller:

\begin{corollary}
  There is an algorithm which solves the control problem for
  distributed alphabets whose communication graph is acyclic and if
  the answer is positive, the algorithm outputs a controller
  satisfying the following property: For every process $p$ and every
  state $s$ of the controller $\Aa_c$, the set of actions allowed for
  process $p$ in state $s$ is the set of all uncontrollable local
  actions plus:
\begin{itemize}
  \item either a unique controllable local action,
  \item or a set of controllable actions shared with a unique neighbour $q$
  of $p$.
\end{itemize} 
\end{corollary}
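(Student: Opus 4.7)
The plan is to combine the iterative reduction of Theorem~\ref{th:reduction} with a careful tracking of the strategy's shape through each back-translation step. First I would fix a root of the (acyclic) communication graph $\CG$ and at the $i$-th reduction eliminate a leaf $r_i$ with parent $q_i$, producing $\Aa^{(i+1)}$ over $\CG^{(i+1)} = \CG^{(i)} \setminus \{r_i\}$. After $|\PP|-1$ steps $\Aa^{(k)}$ has a single process and the reachability control problem on it is an ordinary two-player game solvable in linear time; if no winner exists, the algorithm returns ``no controller'', otherwise a winning strategy $\s^{(k)}$ is obtained. Iterating Lemma~\ref{lemma:sigma prim to sigma} then yields $\s^{(k-1)},\ldots,\s^{(0)}=\s$, after modifying each $\s^{(i+1)}$ beforehand so as to satisfy Lemma~\ref{l:separation}. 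The controller $\Aa_c$ is the finite-state Zielonka automaton realising $\s$.

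The structural property of $\Aa_c$ follows from the following invariant, proved by downward induction on $i$: for every process $p$ and every view, $\s^{(i)}_p$ proposes either a singleton $\{a\}$ with $a$ a controllable local action of $p$ in $\Aa^{(i)}$, or a (possibly empty) set of controllable actions shared with a \emph{single} neighbour of $p$ in $\CG^{(i)}$. The base case $i=k$ is immediate since $\Aa^{(k)}$ has no neighbours. For the step, every process $p\neq q_i,r_i$ keeps its strategy and its neighbourhood unchanged; the newly re-added leaf $r_i$ has $q_i$ as its only neighbour; the interesting case is $q_i$. By the construction of Lemma~\ref{lemma:sigma prim to sigma}, $\s^{(i)}_{q_i}$ is either $B$ (when $\s^{(i+1)}_{q_i}$ proposes the local action $\ch(B)$ of $\Aa^{(i+1)}$, giving communications with the single neighbour $r_i$), a singleton local action (when $\s^{(i+1)}_{q_i}$ proposes a genuine local action of $q_i$), or equals a set $X$ of communications which by the induction hypothesis on $\s^{(i+1)}$ is already shared with a unique neighbour $p'\neq r_i$ of $q_i$ in $\CG^{(i+1)}$, and $p'$ remains a neighbour of $q_i$ in $\CG^{(i)}$.

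The main obstacle I expect is ensuring that enforcing Lemma~\ref{l:separation} on $\s^{(i+1)}$ before each back-translation does not break the unique-neighbour invariant already established on $\s^{(i+1)}$ by induction. This is handled by the observation that the rewriting of Lemma~\ref{l:separation} only further restricts the set proposed by the strategy---it replaces a mixed set of local and communication actions by one or the other, and splits a mixed communication set into one of its two halves---so a strategy already satisfying the stronger invariant remains compliant after the rewriting. Once $\s$ is obtained at the bottom of the recursion, the finite-state Zielonka automaton $\Aa_c$ that realises $\s$ enjoys, component by component and state by state, exactly the allowed-action shape demanded by the statement: the always-present uncontrollable local actions, plus either a unique controllable local action or controllable actions shared with a single neighbour.
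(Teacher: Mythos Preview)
The paper does not give a proof of this corollary; it is stated immediately after Theorem~\ref{th:main} with only the remark that ``our reduction algorithm can be actually used to compute a (finite-state) distributed controller''. Your proposal is therefore not a comparison target but a genuine attempt to supply the missing argument, and it is the natural one given the paper's machinery. The downward induction on~$i$ with the single-neighbour invariant is the right idea, and your case analysis for $q_i$ in the back-translation of Lemma~\ref{lemma:sigma prim to sigma} is correct: the $\ch(B)$ case yields $B\subseteq\S_{q_i,r_i}$, a genuine local action stays a singleton, and a communication set~$X$ inherits its unique neighbour from $\CG^{(i+1)}$. Your observation that the rewriting of Lemma~\ref{l:separation} only restricts a proposal set---and is in fact a no-op once the stronger invariant holds---is exactly the way to dispose of that obstacle.

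Two points deserve a sentence more. First, for the re-added leaf $r_i$ you argue only that $q_i$ is its sole neighbour; this handles the communication case, but you should add that the algorithm may choose the local strategy $\r$ realising the admissible plan to be \emph{positional} (it comes from a one-process reachability game), so that during the local phase $\r_{r_i}$ proposes a single action per state. Second, the base case is not quite ``immediate since $\Aa^{(k)}$ has no neighbours'': you need to say that the single-process reachability game is solved by a positional winning strategy, which then proposes one local action per state. Both are routine and do not affect the correctness of the overall scheme.
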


\section{The lower bound}

We show in this section that in the simplest non-trivial case of
acyclic communication graphs, consisting of a line of three processes,
the control problem is already \EXPTIME-complete. In the general case
the complexity of the control problem grows as a tower of exponentials
function with respect to the size of the diameter of the communication
graph.

\subsection{Height one}
\begin{proposition} \label{p:3} 
  The control problem for the communication graph $1 \edge 2 \edge 3$ is
  \EXPTIME-complete.
\end{proposition}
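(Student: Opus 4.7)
The plan is to combine the upper-bound reduction already proved with a standard partial-information lower bound.

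\textbf{Upper bound.} I would derive EXPTIME directly from Theorem~\ref{th:reduction}, applied twice. First, taking $r=3$ and $q=2$, the resulting plant over $1\edge 2$ has process $2$ of size $\Oo(M_2\cdot 2^{M_3|\S_{2,3}|})$, singly exponential in the input. Applying the reduction again with $r=1$ and $q=2$ yields a single-process plant of size $\Oo(M_2\cdot 2^{M_3|\S_{2,3}|}\cdot 2^{M_1|\S_{1,2}|})$, still singly exponential. On a single process the control problem is a standard two-player reachability game against the local environment, solvable in polynomial time in the plant; the whole algorithm therefore runs in EXPTIME, matching the claim of the proposition.

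\textbf{Lower bound, approach.} For the matching hardness I would reduce from a canonical EXPTIME-complete problem that crucially uses partial information, most naturally the acceptance problem for an alternating polynomial-space Turing machine $M$ on an input $w$, or, equivalently, a two-player reachability game with one-sided incomplete information à la Reif. The structural feature of $1\edge 2\edge 3$ to exploit is that processes $1$ and $3$ cannot communicate directly: their only channel is $2$. Between two successive synchronizations of $2$ with $1$, process $3$ evolves silently under local uncontrollable actions (and symmetrically). Hence $2$ must commit to its moves without observing what the environment has meanwhile done at the opposite leaf, and any distributed strategy is forced to be uniform over this uncertainty, which is the lever for the exponential blow-up.

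\textbf{Encoding.} The plan is to let process $2$ simulate the control of $M$ and play the existential choices via its controllable actions, while processes $1$ and $3$ jointly store a configuration of $M$. Universal choices of $M$ become uncontrollable local actions at $1$ and $3$. One round simulating a step of $M$ proceeds in three phases: (a) process $2$ synchronizes in turn with $1$ and with $3$ to read the current configuration and announce the next transition; (b) the environments at $1$ and $3$ independently pick an address inside the configuration together with the claimed new cell content; (c) consistency is verified at the next synchronization of $2$ with the corresponding leaf. Local target sets are chosen so that they are jointly reachable iff every branch of the computation tree of $M$ on $w$ ends in an accepting configuration; hence a winning distributed strategy exists iff $M$ accepts $w$, and the reduction is polynomial in $|M|+|w|$.

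\textbf{Main obstacle.} The hard part is ruling out cheating strategies in which $2$ produces locally consistent but globally inconsistent transcripts towards $1$ and $3$: because $2$ has no view of what the environment did at the opposite leaf, nothing local prevents it from telling two different stories. I would handle this by the classical address-cell audit trick from Peterson--Reif-style EXPTIME lower bounds: at each round the environment independently selects at $1$ (and, in parallel, at $3$) which cell of the current configuration to audit, and the audit passes only if the contents that $2$ claims on the two sides agree on that cell; a failing audit leads to a non-final blocking local state at the leaf. Designing these verifier gadgets tightly---so that the only way for the controller to pass every possible audit is genuinely to simulate an accepting run of $M$---is where the bulk of the technical work sits, but once done it yields the desired polynomial-time reduction and establishes EXPTIME-hardness.
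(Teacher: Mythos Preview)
Your upper bound is correct and matches the paper: two applications of Theorem~\ref{th:reduction} collapse the architecture to a single-process reachability game of singly-exponential size.

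For the lower bound you pick the right hard problem (alternating PSPACE) but invert the roles of the three processes, and this inversion is a genuine gap. In the paper the two \emph{leaf} processes $1$ and $3$ are the ones that generate configurations, streaming them symbol by symbol as local controllable actions; process $2$ is merely a lightweight verifier. The environment may interrupt the stream at $1$ and, independently and concurrently, at $3$, freezing one tape position together with the symbol just emitted; process $2$ then receives both (position,\,symbol) pairs and blocks only if they witness an inconsistency (equal positions but unequal symbols for the equality test, or a local violation of $\vdash$ for the successor test). Hardness comes from $1$ and $3$ being concurrent and never communicating directly: the only way the system survives every possible pair of challenges is for both leaves to deterministically stream the same correct computation.

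Your encoding instead puts the existential choices and the configuration inside the middle process $2$, with audits at the leaves. Two things go wrong. First, the obstacle you name is not one: $2$ is a single sequential process whose view already contains every past synchronization with both leaves, so its strategy is a function of one history, not two, and it literally cannot ``tell two different stories''. Second, and this is the real problem, if $2$ must answer arbitrary address--cell audits posed by the environments at $1$ and $3$, then $2$ has to hold the current $p(n)$-cell configuration in its local state, which costs $2^{\Omega(p(n))}$ states and destroys the polynomial reduction. The paper never stores a configuration anywhere; configurations are streamed, and only one challenged position (transmitted bitwise to keep the alphabet fixed) plus its symbol are ever remembered. To repair your construction you would have to move the streaming to the two concurrent leaves and let the middle process compare---which is precisely the paper's construction.
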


\medskip

\begin{proof}
  The EXPTIME upper bound follows from Theorem~\ref{th:reduction}, as
  the height of the tree is 1.
So the reduction is applied twice
from process $2$, first simulating process $1$, then simulating
process $3$. Finally, a reachability game is solved on an exponential
size arena.  

  For the lower bound we simulate an alternating polynomial space
  Turing machine $M$ on input $w$. We assume that $M$ has a unique
  accepting, blocking configuration (say with blank tape, head
  leftmost). The goal now is to let processes $1,3$ guess an accepting
  computation tree of $M$ on $w$. The environment will be able to
  choose a branch in this tree and challenge each proposed
  configuration. Process $2$ will be used to validate tests initiated
  by the environment. If a test reveals an inconsistency, process 2
  blocks and the environment wins. To summarize the idea of the construction: \anca{added}
  processes $1$ and $3$
  generate sequences of configurations (encoded by local actions),
  separated by action $\$$ and $\bar{\$}$, respectively, shared with
  process $2$. Both start with the initial configuration of $M$ on
  $w$. Transitions from existential states are chosen by the plant,
  and those from universal ones by the environment. 
At a given time, process $1$ has generated the same number of
configurations is process $3$, or process $3$ is about generating one
configuration more. In the first case, the environment can check
that it is the
same configuration; and in the second, it can check that it is the successor
configuration. In this way, $1$ and $3$ need to generate the same branch of the
run tree.

  A computation of $M$ with space bound $n$ is a sequence $C_0
  \vdash C_1 \vdash \cdots \vdash C_N$, where each configuration $C_i$ is
  encoded as a word from $\G^* (Q \times \G) \G^*$ of length
  $n$. Since $M$ is alternating, its acceptance is expressed by the existence
  of a tree of accepting computations. 

  Processes 1 starts by generating the initial configuration on $w$, followed by a synchronization
  symbol $\$$ with process 2. After this, process 1 generates a sequence of
  configurations separated by $\$$. When generating a
  configuration, process 1 remembers $M$'s state $q$ and the symbol $A$
  under the head. All transitions so far are controllable. After
  generating $\$$ process 1 goes into a state where the outgoing
  transitions are labeled by $M$'s transitions on $(q,A)$ (if the
  configuration was not blocking). These transitions are controllable
  if $q$ is existential, and uncontrollable if $q$ is
  universal. The transition chosen, either by the plant or the
  environment, is  stored in the state up to the next synchronization symbol.  Finally, if the current
  configuration is final then process 1 synchronizes with 2 on $\$_F$
  (instead of $\$$) and goes into an accepting state. 

  The description is similar for process 3, with
  $\bar{\G},\bar{Q},\bar{\$}, \bar{\$}_F$ instead of $\G,Q,\$,\$_F$.  Finally, process
  $2$ has two main states, $\mathit{eq}$ and $\mathit{succ}$, with transitions $\mathit{eq}
  \act{\bar\$} \mathit{succ}$ and $\mathit{succ} \act{\$} \mathit{eq}$. From state $\mathit{eq}$ it can
  go to an accepting state after reading $\bar{\$}_F$ followed by $\$_F$.

\begin{figure}[htb]
  \centering
\begin{tikzpicture}[yscale=1.3,xscale=1.2]
\path (0,1)node(a){} (0,0)node(ap){} (2.2,1)node(b){}
(2.2,0)node(bp){} (4.4,1) node(e) {} (4.4,0) node(ep) {};
\path (1,1.2) node(c) {$C_0$} (1,-0.2) node(c') {$\bar{C}_0$} 
(3.2,1.2) node(d) {$C_1$} (3.2,-0.2) node(d') {$\bar{C}_1$} 
(5,1.2) node (f) {$C_2$} (5,-0.2) node (fp) {$\bar{C}_2$}
(6.7,.9) node (h) {$(i,\a)$} (6.5,0.1)  node (hp) {$\bar{(j,\b)}$} 
(1.8,.25) node (ip) {$\bar{\$}$} (2.1,.75) node (i) {$\$$}
(4,.25) node (jp) {$\bar{\$}$} (4.3,.75) node (j) {$\$$}

;
\path (-.2,1) node {1}  (-.2,0) node {3} (-.2,.5) node {2};
\draw (a) -- ++(2.2,0) -- +(0,-0.5);
\draw (b) -- ++(2.2,0) -- +(0,-0.5);
\draw (e) -- ++(1.9,0) -- +(0,-0.5);

\draw (ap) -- ++(2,0) -- +(0,0.5);
\draw (bp) -- ++(2,0) -- +(0,0.5);
\draw (ep) -- ++(1.7,0) -- +(0,.5);
\end{tikzpicture}
\caption{Environment chooses positions $i,j$ in
    $C_P,\bar{C}_P$ with $P=2$. System wins iff $\a=\b$ or $i
    \not= j$.}
  \label{f:3}
\end{figure}
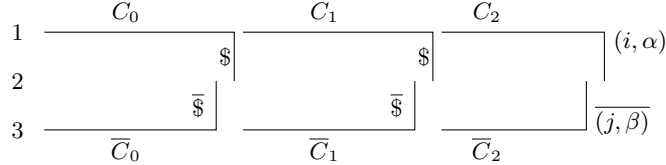

\vspace{-.5cm}

The environment can initiate 2 kinds of tests: equality and successor
test. The equality test checks that $C_P=\bar{C}_P$ and the successor
test checks that $C_P \vdash \bar{C}_{P+1}$. 

For the equality test, the environment can choose a position $i$
within $C_P$ and a position $j$ in $\bar{C}_P$. Formally, for each
(controllable) outgoing transition $s \act{\a}$ of process 1 with $\a
\in \G \cup (Q \times \G)$ there is a transition $s \act{(\eqtest,\a)}
(\eqtest,i,\a)$ with $(\eqtest,\a)$ uncontrollable. The target state
$(\eqtest,i,\a)$ records the tape position $i$ (known from $s$) and
the tape symbol $\a$.  In state $(\eqtest,i,\a)$ process 1
synchronizes with 2 on action $(\eqtest,i,\a)$, and then stops
(accepting). The same for process 3 with uncontrollable actions
$\bar{(\eqtest,\b)}$, and synchronization action $\bar{(\eqtest,j,\b)}$.

From state $\mathit{eq}$ process $2$ can perform a synchronization
$\bar{(\eqtest,j,\b)}$ with process $3$ and then one with process $1$ on
any $(\eqtest,i,\a)$, provided $i\not=j$ or $\a=\b$, and then
accept. This is the case where the environment has chosen positions on
both lines 1 and 3 (see Figure~\ref{f:3}). If the environment has
chosen a test transition in $C_P$ but not in $\bar{C}_P$ (or
vice-versa), process 2 will accept (and stop), too.The successor test
is similar. 

The successor test is similar, it consists in choosing a position
within $C_P$ and one within $\bar{C}_{P+1}$. The information checked
by process 2 includes the symbols $\a_-,\a,\a_+$ of
$C_P$ at positions $i-1,i,i+1$ resp., so process 1 goes on transition
$\bar{(\inctest,\a)}$ into a state of the form $(i,\a,\a_-,\a_+)$. In
state $\bar{t}$ process 2 can perform a synchronization on
$(\inctest,i,\a,\a_-,\a_+)$ with process 1, and then one with process
3 on $\bar{(\inctest,j,\b)}$, provided $i \not= j$ or the symbols
$\a_-,\a,\a_+$ are inconsistent with the new middle symbol $\b$
according to $M$'s transition relation. 

The reader may notice that \anca{added} we need to guarantee that the universal
transitions chosen by the environment are the same, for processes $1$
and $3$. This can be enforced by communicating the transitions with actions
$\$,\bar{\$}$ to process $2$, who is in charge of 
checking. Moreover, note that the action alphabet above is not constant,
in particular it depends on $n$. This can be fixed by replacing each
action of type $(\eqtest,i,\a)$ (or alike) by a sequence of
synchronization actions where $i$ is transmitted bitwise. By
alternating the bits transmitted by $1$ and $3$, respectively, process
2 can still compare indices $i,j$.

Note also that configurations $C_P, \bar{C}_P$ are generated in
parallel, and so are $C_P$ and $\bar{C}_{P+1}$. This is crucial
for the correctness.
\end{proof}

\begin{lemma}
  The control problem defined in Proposition~\ref{p:3}  has a winning strategy if and only
  if $M$ accepts $w$. 
\end{lemma}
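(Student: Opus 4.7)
The plan is to prove both directions by leveraging the very tight correspondence between accepting computations of $M$ and winning plays of the constructed game.

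For the direction ``$M$ accepts $w$ $\Rightarrow$ a winning strategy exists'', I would fix an accepting computation tree $\tau$ of $M$ on $w$ and build the strategy $\sigma$ process by process. Processes $1$ and $3$ both walk along the \emph{same} branch of $\tau$: at an existential state of $M$, both pick the unique transition prescribed by $\tau$; at a universal state, both follow the transition forced by the environment. Between two synchronizations with process $2$ each of them writes out the current configuration symbol by symbol via its local actions. Process $2$ alternates $\$$ and $\bar\$$ in its main loop and also, at the end, accepts the pair $\$_F,\bar\$_F$. Since both lines produce \emph{valid} configurations that match according to $\tau$, every equality test $C_P=\bar C_P$ and every successor test $C_P\vdash\bar C_{P+1}$ that the environment may launch lets process $2$ accept; the test branches terminate in all three processes' final states, and the non-test branch reaches $\$_F,\bar\$_F$ and then a global final state. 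So $\sigma$ is winning.

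For the direction ``winning strategy $\Rightarrow$ $M$ accepts $w$'', I would unfold any winning strategy $\sigma$ against all environment behaviours that never launch a test, and read off from it a tree $\tau_\sigma$ whose nodes are configurations. At existential moments, $\sigma$ prescribes via the local actions of processes $1$ and $3$ a single child configuration; at universal moments, the environment's choice of a TM transition produces several children. Two properties must then be established: (a) on every branch the configurations emitted by process $1$ coincide with those emitted by process $3$, and are consistent with $M$'s transition relation (successive ones are valid successors, and at a universal moment both processes select the same transition); and (b) every maximal branch of $\tau_\sigma$ ends in the unique accepting configuration of $M$. Property (a) follows by contradiction: if at some round $P$ we had $C_P\neq\bar C_P$ or $C_P\not\vdash\bar C_{P+1}$, or if processes $1$ and $3$ had disagreed on the chosen universal transition transmitted through $\$,\bar\$$, then the environment could launch the corresponding test (or let process $2$'s mismatch check fire), forcing process $2$ to block on a non-final state and so contradicting the assumption that $\sigma$ wins. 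Property (b) holds because the only way for process $2$ to reach its final state is via $\bar\$_F\$_F$, which in turn requires both $1$ and $3$ to have emitted a configuration containing the unique accepting control state. Thus $\tau_\sigma$ is an accepting computation tree for $M$ on $w$.

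The main technical obstacle I expect is step (a) of the reverse direction: making precise why the potential for \emph{arbitrarily timed} tests (equality, successor, and the universal-transition consistency check through $\$,\bar\$$) forces $\sigma$ to be honest on \emph{every} pair of generated configurations and at \emph{every} universal branching point. The key ingredient is that the test actions are uncontrollable and can be invoked by the environment at the start of each configuration block, together with the fact that the bitwise transmission of the challenged position $i$ (and likewise $j$) is interleaved through process $2$ so that a false $C_P$, $\bar C_P$, or mismatched universal choice cannot be hidden by a clever strategy. Once this local correctness at each round is secured, extracting the global accepting tree $\tau_\sigma$ is direct.
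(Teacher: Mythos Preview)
Your proposal follows essentially the same approach as the paper: both directions translate between accepting computation trees of $M$ on $w$ and winning strategies, using the equality and successor tests to force correctness of the generated configurations. You are in fact more explicit than the paper about the tree structure arising from universal branching; the paper leaves this implicit by speaking of ``a maximal winning play without tests'' and tacitly quantifying over the environment's universal choices.

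There is one point that needs repair in your reverse direction. You list ``processes $1$ and $3$ disagreed on the chosen universal transition'' among the violations that would let the \emph{environment} force process~$2$ to block. But the universal transitions are environment actions: the environment can always make them differ on the two sides. If such a mismatch really caused process~$2$ to block in a non-final state, \emph{no} strategy could ever be winning. The consistency check carried by $\$,\bar\$$ is there to protect the \emph{system}: on a mismatch, process~$2$ must be able to reach acceptance (the environment's branch choice is incoherent, so the challenge is void), exactly as it accepts when a test is launched with $i\neq j$. Consequently, when you build $\tau_\sigma$ you should restrict from the outset to plays in which the environment makes the \emph{same} universal choice on both sides and launches no test; agreement of the universal transitions along a branch is then by construction, not something to derive by contradiction. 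With that adjustment, the rest of your argument---using the threat of equality and successor tests to force $C_P=\bar C_P$ and $C_{P-1}\vdash\bar C_P$ under a winning $\sigma$---matches the paper's proof.
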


\medskip

\begin{proof}
  We assume that there is a winning strategy in the control game. Let us consider a maximal winning play without
  tests, where process 1 generates $C_0 \$ C_1 \$ \cdots C_N \$_F$ and
  process 3 generates $\bar{C}_0 \bar\$\bar{C}_1 \bar\$ \cdots
  \bar{C}_{N'} \bar{\$}_F$. By construction, each of the $C_p$ and
  $\bar{C}_q$ are configurations of length $n$, $C_0=\bar{C}_0$ is
  the initial configuration of $M$ on $w$, and $C_N=\bar{C}_{N'}$ is
  the accepting configuration. Suppose by contradiction
  that $C_0, \ldots, C_N$ is not a run of $M$. Assume first
  that $C_p=\bar{C}_p$ for all $0 \le p < P$, but $C_{P-1}
  \not\vdash \bar{C}_P$. In this case the environment could have chosen the
  first position $i$ where $\bar{C}_P$ does not correspond to a
  successor of $C_{P-1}$, and process 2 would have rejected after the
  synchronization $(\inctest,i,\a,\a_-,\a_+)$ followed by
  $\bar{(\inctest,i,\b)}$, contradicting the fact that the strategy is
  winning. The second case is where $C_p=\bar{C}_p$ for all $0 \le p <
  P$, but $C_P \not= \bar{C}_P$.  Then the environment could have chosen the
  first position $i$ where $C_P$ and $\bar{C}_P$ differ, and process 2
  would have rejected after the synchronization $\bar{(\eqtest,i,\b)}$
  followed by $(\eqtest,i,\a)$ with $\a\not= \b$, again a
  contradiction.  This means that $C_0 \vdash C_1 \vdash \cdots
  C_N$. Moreover, $C_N=\bar{C}_N$ is final since process 1 is in a
  final state (thus also $N=N')$.

  For the converse, we assume that $M$ accepts $w$. Let the
  strategy of processes 1 and 3 consist of generating an accepting run
  tree of $M$ on $w$.  For existential configurations, say that both 1
  and 3 choose the first winning transition among all
  possibilities. Every maximal play without environment test corresponds
  to an accepting run $C_0 \vdash C_1 \vdash \cdots C_N$, hence the
  play reaches a final state on every process. Every maximal play with
  test is of one of the following forms: (1) $C_0 \bar{C}_0 \bar\$ \$
  \cdots C_{P-1} \bar{C}_{P-1}\bar\$ \$ xy$, where $x$ and $y$ are
  prefixes of $C_P$ and $\bar{C}_P$, followed by $\eqtest$-actions, or
  (2) $C_0 \bar{C}_0 \bar\$ \$ \cdots \bar{C}_{P-1} \bar\$ x y$, where
  $x$ is prefix of $C_{P-1}$ and $y$ a prefix of $\bar{C}_P$, followed
  by $\inctest$-actions. In both cases, the environment's challenge
  fails, since $C_P=\bar{C}_P$ and $C_{P-1} \vdash \bar{C}_P$.
\end{proof}

\subsection{Lower bound: general case}

Our main objective now is to show how using a communication
architecture of diameter $l$ one can code a counter able to represent numbers
of size $\Tower(2,l)$ (with $\Tower(n,l)=2^{\Tower(n,l-1)}$ and
$\Tower(n,1)=n$). Then an easy adaptation of the construction will allow
to code computations of Turing machines with the same space bound as
the capabilities of counters.

We fix $n$ and will be first interested to define $n$-counters.  Let
$\S_i=\set{a_i,b_i}$ for $i=1,\dots,n$. We will think of $a_i$ as $0$
and $b_i$ as $1$, mnemonically: $0$ is round and $1$ is tall. Let
$\S^\#_i=\S_i\cup\set{\#_i}$ be the alphabet extended with an end
marker.

A $1$-counter is just a letter from $\S_1$ followed by $\#_1$. The value
of $a_1$ is  $0$, and the one of $b_1$ is $1$. Following this
intuition we write $(1-c)$ to denote $b$ if $c=a$ and vice versa.

An \emph{$(l+1)$-counter} is a word
\begin{equation}\label{eq:ctr}
  x_0u_0  x_1u_1\cdots  x_{k-1} u_{k-1}\#_{l+1}
\end{equation}
where $k=\Tower(2,l)$ and for every $i$, letter $x_i\in\S_{l+1}$ and
$u_i$ is an $l$-counter with value $i$. The value of the above
$(l+1)$-counter is $\sum_{i=0,\dots,k} x_i2^i$.  The end marker
$\#_{l+1}$ will be convenient in the construction that follows.
An \emph{iterated $(l+1)$-counter} is a nonempty sequence of
$(l+1)$-counters.

\medskip For every $l$ we will define a plant $\Cc^l$ such that the winning
strategy for the system in $\Cc^l$ will need to produce an iterated
$l$-counter. 

\medskip For $l=1$ this is very easy, we have only one
process in $\Cc^1$ and all transitions are controllable. 

\noindent\centerline{\includegraphics[scale=.62]{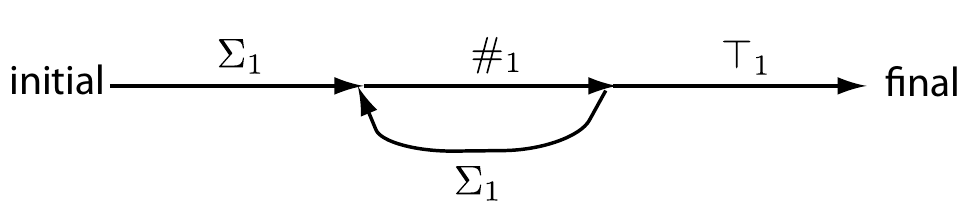}}

This automaton can repeatedly produce a $1$-counter and eventually go
to the accepting state. The letter on which it goes to accepting state
will be not important, so we put $\top_1$. Recall that our acceptance
condition is that all processes reach a final state from which no
actions are possible.

Suppose that we have already constructed $\Cc^l$. We want now to
define $\Cc^{l+1}$, a plant producing an iterated $(l+1)$-counter,
i.e., a sequence of $l$-counters with values $0,1,\ldots, (\Tower(2,l)-1),
0,1, \ldots$. We assume that the
communication graph of $\Cc^l$ has the distinguished root process $r_l$. Process
$r_l$ is in charge of generating an iterated $l$-counter. From $\Cc^l$ we
will construct two plants $\Dd^l$ and $\bar \Dd^l$, over disjoint sets
of processes. The plant $\Dd^l$ is obtained by adding a new root
process $r_{l+1}$ that communicates with $r_l$, similarly for the 
plant $\bar \Dd^l$ with root process $\bar{r_{l+1}}$.  The plant
$\Cc^{l+1}$ will be the composition of $\Dd^l$ and $\bar \Dd^l$ with a new
\emph{verifier process} that we name $\Vv_{l+1}$. The root process of the communication graph
of $\Cc^{l+1}$ will be $r_{l+1}$. The schema of the
construction is presented in Figure~\ref{fig:architecture}. Process
$r_{l+1}$, as well as $\bar{r_{l+1}}$, are in charge of generating an
iterated $(l+1)$-counter. That they behave indeed this way is
guaranteed by a construction similar to the one of
Proposition~\ref{p:3}, with the help of the verifier $\Vv_{l+1}$:
\anca{added} the
environment gets a chance of challenging each $l$-counter of the
sequence of $r_{l+1}$ (and similarly for $\bar{r_{l+1}}$). These
challenges correspond to two types of tests, equality and
successor. If there is an error in one of these sequences then 
the environment can place a challenge and win. Conversely, if there is
no error no challenge of the environment can be successful; this means
then that the sequences of $l$-counters have
correct values $0,1,\ldots,(\Tower(2,l)-1),0,1,\ldots$.

\begin{figure}[htbp]
  \centering
  \includegraphics[scale=.7]{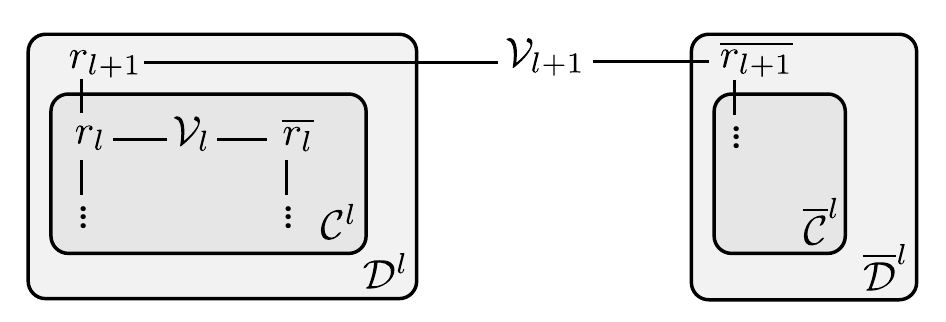}
  \caption{Architecture of the plant $\Cc^{l+1}$}
  \label{fig:architecture}
\end{figure}

\medskip\noindent\textit{Construction of $\Dd^l$.}
The construction of the automaton  of the new root
$r_{l+1}$ is presented in Figure~\ref{fig:main automaton}.

\begin{figure*}[htbp]
  \centering
  \includegraphics[scale=.7]{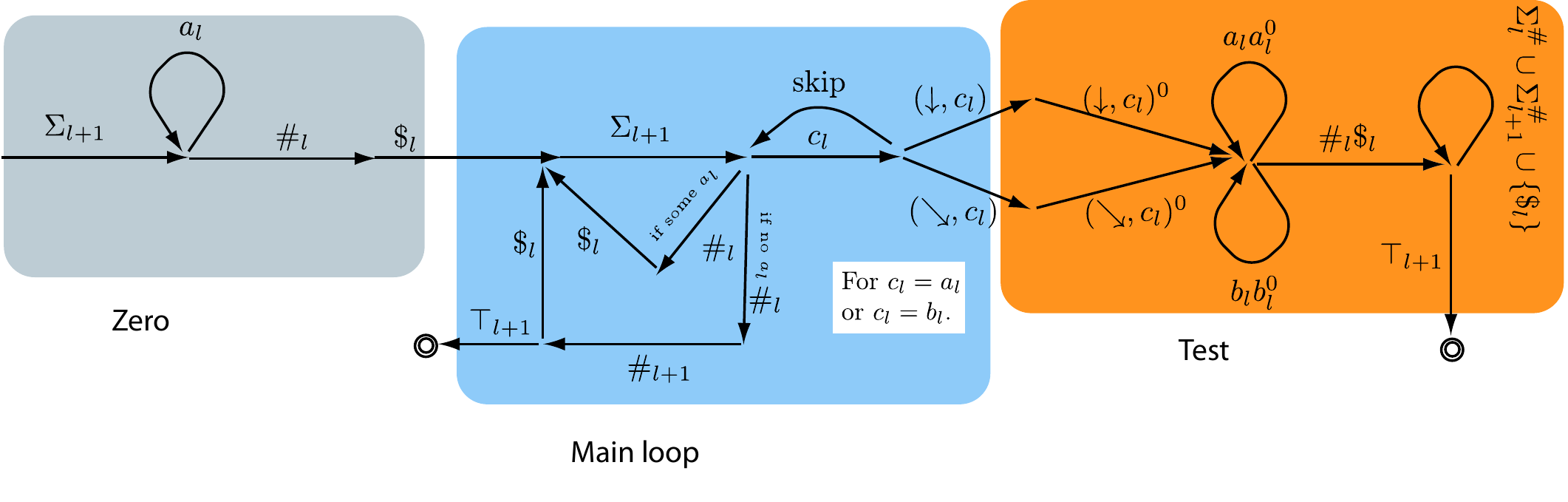}
  \caption{Automaton for process $r_{l+1}$}
  \label{fig:main automaton}
\end{figure*}

We start by modifying the automaton for process $r_l$, given by $\Cc^l$.
Actions of $r_l$ from $\S^\#_l$, that were previously local for
$r_l$, become shared actions with $r_{l+1}$.  Process $r_{l+1}$ has new local actions 
$\S^\#_{l+1}$ and an action $\$_l$, shared with process $\Vv_{l+1}$. The action
$\$_l$ is executed after each $l$-counter, that is, after each $\#_l$.

The automaton for $r_{l+1}$ has two main tasks: it ``copies'' the
sequence of $l$-counters generated by $r_l$ (actually only the
projection onto $\S_l$) and it interacts with $\Vv_{l+1}$ towards the
verification of this sequence. This automaton is composed of three parts that synchronize
with $r_l$, forcing it to behave in  some specific way. The first part called
``zero'' enforces that $r_l$ starts with an $l$-counter with value 0
(otherwise $r_{l+1}$ would block). When we read $\#_l$ we know that
the first $l$-counter has ended and the control is passed to the
second, main part of $r_{l+1}$.

The main part of $r_{l+1}$ gives a possibility for the environment
to enter into a test part. That is, after each transition on $c_l\in\S_l$
(that is $a_l$ or $b_l$) the environment chooses between action
$\sskip$ (that continues the main part) or a test action from
$\set{(\eqtest,c_l), (\inctest,c_l)}$  that leads into the test
part. The main part also outputs a local action $\#_{l+1}$ when
needed, i.e., whenever the last seen $l$-counter was
maximal. (Technically it means that there has been no $a_l$ since the
last $\#_l$.) The transition on $\#_{l+1}$ gives a possibility to go
to the accepting state.

The test part of $r_{l+1}$ simply receives the $\S_l$-actions of
$r_l$ and sends them to process $\Vv_{l+1}$ (cf.~loop $a_la_l^0$ and $b_lb_l^0$). It does so until it receives
$\#_l$ signaling the end of the counter. Then it sends $\$_l$ to
process $\Vv_{l+1}$ to inform it that the counter has finished. After this
$r_{l+1}$ enters in a state where it can do any controllable
action. From this state at any moment it can enter the accepting
state on a dummy letter $\top_{l+1}$.

\noindent\textit{Plant $\bar\Dd^l$.}
This one is constructed in almost the same way as $\Dd^l$. Most
importantly all processes (and actions) in $\bar\Dd^l$ are made disjoint from
$\Dd^l$. We will write $\bar a$ for the letter of $\bar\Dd^l$
corresponding to $a$ in $\Dd^l$.

The other difference between $\Dd^l$ and $\bar\Dd^l$ is that in the
latter every transition $\bar{(\inctest,c)}$ is changed into
$\bar{(\inctest,1-c)}$ if since the last $\$_l$ there have been only
$\bar{l_l}$. This is done to accommodate for the carry needed for the successor test. Recall that
$(1-c)$ stands for $a$ if $c$ is $b$  and vice versa.

\noindent\textit{Process $\Vv_{l+1}$.}
This process will have two main states $\mathit{eq}$ and $\mathit{succ}$, the first one
being initial. From $\mathit{eq}$ there is a transition on $\bar\$_l$ to $\mathit{succ}$,
and from $\mathit{succ}$ there is a transition on $\$_l$ back to
$\mathit{eq}$. Moreover from $\mathit{eq}$ it is possible to go to the accepting state.

\begin{figure}
  \centering
  \includegraphics[scale=.62]{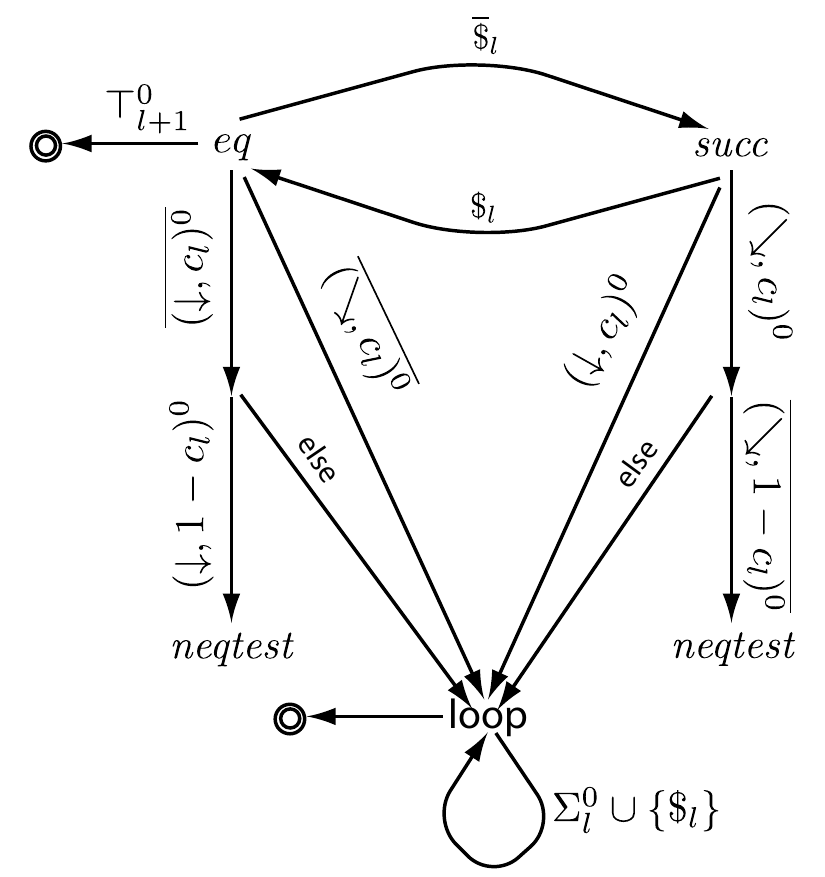}
  \caption{Process $\Vv_{l+1}$.}
  \label{fig:verifier}
\end{figure}

Additionally, from $\mathit{eq}$ there is a transition on $\bar{(\eqtest,c)^0}$
to the state $(eq,c)$ for every $c\in\S_l$. Similar to the
construction of Proposition~\ref{p:3}, process $\Vv_{l+1}$ should
accept if either the two bits from $\S_l$ challenged by the
environment are compatible with the test, or their positions
are unequal. So, from state $(eq,c)$ on letter
$(\eqtest,1-c)^0$ there is a transition to a state called
$\mathit{neqtest}$; on all other letters there is a transition to a
looping state (see also Figure~\ref{fig:verifier}). Similarly from $\mathit{succ}$, but now with $(\inctest,c)$
letters, and the order of reading from the components reversed.

From state $\mathit{neqtest}$ process $\Vv_{l+1}$ verifies that the sequence of
actions $\S^0_l$ initiated by $r_{l+1}$ has not the same \emph{length} as the
sequence over $\bar \S^0_l$ initiated by $\bar{r_{l+1}}$ (up to the
moment where $\$^0_l$ and $\bar\$^0_l$ are executed). This is done
simply by interleaving the two sequences of actions $a^0_l,b^0_l$, shared with
$r_{l+1}$ and $\bar{r_{l+1}}$, respectively\anca{deja la}. Notice that
the symbols $a^0_l,b^0_l$ by themselves are not important, one could as well replace
them by a single symbol. If this is the
case, then process $\Vv_{l+1}$ gets to an accepting state, otherwise it
rejects. In state $\mathit{loop}$ process $\Vv_{l+1}$ can perform any 
controllable action and then enter the
accepting state.

\medskip\noindent\textit{Putting together $\Cc^{l+1}$.}
The plant $\Cc^{l+1}$ is the composition of $\Dd^l$, $\bar\Dd^l$ and the
new process $\Vv_{l+1}$. The actions of $\Cc^{l+1}$ are the ones of $\Cc^l$,
plus $X \cup \bar X$ where $X$ consists of:
\begin{itemize}
\item $\S^\#_{l+1}\subseteq \Ssys$ with domain $\{r_{l+1}\}$,
\item $\S^\#_l\subseteq \Ssys$  with domain $\set{r_l,r_{l+1}}$,
\item $\sskip \in \Senv$ and $(\eqtest,c),
  (\inctest,c) \in \Senv$ with domain $\set{r_{l+1}}$ ($c \in
  \S_l$),
\item $c^0$, $\$_l$, $(\eqtest,c)^0$, and $(\inctest,c)^0$, all in $\Ssys$
  with domain $\set{r_{l+1},\Vv_{l+1}}$ ($c\in \S_l$).
\end{itemize}
The set $\bar X$ is defined similarly, by replacing every action $c$
by $\bar c$, and $r_l,r_{l+1}$ by $\bar{r_l},\bar{r_{l+1}}$ in the
domain of the action.

First we show that the system can indeed win every control instance
$\Cc^l$. Moreover he can win and produce at the same time any iterated
$l$-counter.

\medskip

\begin{lemma}\label{lemma:strategy-exits}
  For every level $l$ and every iterated $l$-counter $\cc$ there is a
  winning strategy $\s$ in $\Cc^l$ such that for every $\s$-play
  the projection of this play on
  $\bigcup_{i=1,\ldots,l} \S^\#_i$ is $\cc$.
\end{lemma}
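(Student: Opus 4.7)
The plan is to prove Lemma~\ref{lemma:strategy-exits} by induction on $l$. The base case $l=1$ is immediate: the plant $\Cc^1$ is a single process with only controllable transitions, and since any prescribed iterated $1$-counter is just a sequence over $\S^\#_1 = \{a_1,b_1,\#_1\}$ that the automaton can freely generate before taking the $\top_1$ transition to the accepting sink, the required strategy simply picks the next letter dictated by~$\cc$.

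For the inductive step, given an iterated $(l+1)$-counter $\cc$, I would first decompose it. The projection of $\cc$ onto $\bigcup_{i\le l}\S^\#_i$ is itself an iterated $l$-counter $\cc_l$ (the concatenation of all inner $l$-counters of $\cc$, which by definition run through the values $0,1,\dots,\Tower(2,l)-1$), and the same holds on the barred side. Apply the induction hypothesis to obtain a strategy $\s_l$ in $\Cc^l$ producing $\cc_l$ on the subautomaton driving $r_l$, and analogously $\bar\s_l$ on $\bar\Dd^l$ producing $\bar{\cc_l}$. It remains to define strategies for the three new processes $r_{l+1}$, $\bar{r_{l+1}}$ and $\Vv_{l+1}$ and to glue everything together.

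The strategy for $r_{l+1}$ stays in the ``main'' part and proposes, at each step, exactly the $\S_{l+1}$-bit or $\#_{l+1}$-symbol prescribed by $\cc$, synchronising on the $\S^\#_l$-symbols with $r_l$ (which by $\s_l$ emits the matching bits of $\cc_l$). When the environment selects $\sskip$, generation continues; when the environment selects a test action $(\eqtest,c_l)$ or $(\inctest,c_l)$, the strategy follows the unique path through the test part, forwarding the remaining bits of the current inner counter to $\Vv_{l+1}$ up to $\$_l$, and then closing with $\top_{l+1}$. The strategy for $\bar{r_{l+1}}$ is defined symmetrically, and the strategy for $\Vv_{l+1}$ takes any enabled transition (there is essentially no choice until the $\mathit{neqtest}$/$\mathit{loop}$ state, from which the dummy controllable action and then the accepting state can be reached).

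Finally one has to check two things. First, on any $\s$-play the projection onto $\bigcup_{i\le l+1}\S^\#_i$ coincides with (a prefix of, in the case the environment triggers a test, and with all) $\cc$: this is direct from the definition, since the only $\S^\#_{l+1}$-letters are those proposed by $r_{l+1}$ in the main part, and the $\S^\#_{\le l}$-letters are forwarded unchanged from $\s_l$ which by IH produces $\cc_l$. Second, every maximal play must reach accepting states on all processes: for a play without a test this follows because $r_{l+1}$, $\bar{r_{l+1}}$ and (by IH) all processes below reach their accepting states after $\cc$ is generated; for a play containing a test, the key point is correctness of the test, which is where the main effort lies. For an equality test the environment challenges position $i$ on both sides and $\Vv_{l+1}$ accepts iff either the challenged positions differ (detected because the two streams fed by $r_{l+1}$ and $\bar{r_{l+1}}$ enter $\mathit{neqtest}$ of different lengths) or the challenged bits agree, and since the $i$-th inner counters on the two sides are both the $l$-counter of value $i$ within the same $(l+1)$-counter of $\cc$, the bits at equal positions agree. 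The successor test is analogous, using that $\bar\Dd^l$ flips bits after a trailing run of $\bar b_l$'s to implement the carry, so that whenever $\bar u_{i+1}$ is the successor of $u_i$ the challenged bits match at equal positions. The main obstacle is precisely this bookkeeping of the successor test with carry; once this is checked, the winning property follows.
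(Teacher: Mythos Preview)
Your overall plan (induction on $l$, use the induction hypothesis for $\Cc^l$ and $\bar\Cc^l$, and prescribe the behaviour of $r_{l+1}$, $\bar r_{l+1}$, $\Vv_{l+1}$) matches the paper's, but the strategy you describe does \emph{not} work, and the gap is exactly where you say ``and then closing with $\top_{l+1}$''.

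Recall that in $\Cc^{l+1}$ the actions from $\S^\#_l$ have domain $\{r_l,r_{l+1}\}$: they are no longer local to $r_l$ but shared with $r_{l+1}$. If, after a test, $r_{l+1}$ immediately plays $\top_{l+1}$ and enters its (blocking) accepting state, then $r_l$ can execute no further action from $\S^\#_l$. But the strategy $\s_l$ you imported from the induction hypothesis produces \emph{all} of $\cc_l$ and only then accepts; it will therefore sit in a non-final state waiting for a shared action that $r_{l+1}$ can no longer supply. The resulting maximal play is not winning. The same issue arises symmetrically on the barred side, and it also explains why your own parenthetical ``a prefix of, in the case the environment triggers a test'' is a red flag: the lemma claims the projection is $\cc$ on \emph{every} $\s$-play, not a prefix of it. The fix, and what the paper does, is that after emitting $\$_l$ at the end of a test, $r_{l+1}$ moves to the ``loop'' state of the test copy in which \emph{any} controllable action is enabled, and from there it continues to synchronize with $r_l$ and to emit the remaining $\S_{l+1}^\#$-letters of $\cc$; only when $\cc$ is finished does it play $\top_{l+1}$. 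With this correction both requirements (winning, and projection exactly $\cc$) go through.

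A second, smaller gap is your case analysis of tests. You treat only the situation where the environment fires an equality (resp.\ successor) challenge on both sides at the same inner counter. But the environment is free to challenge only one side, to mix challenge types, or to challenge different inner counters on the two sides. The paper handles this by looking at the \emph{first} inner counter at which a question occurs and distinguishing whether that question is on the $\bar r_{l+1}$ side (then $\Vv_{l+1}$ has just read $\bar\$_l\$_l$ and is in state $\mathit{eq}$) or only on the $r_{l+1}$ side (then $\Vv_{l+1}$ has read $\$_l\bar\$_l$ and is in state $\mathit{succ}$); any pair of synchronizations that is not of the single dangerous form $\bar{(\downarrow,c)}(\downarrow,1-c)$ (resp.\ $(\searrow,c)\bar{(\searrow,1-c)}$) sends $\Vv_{l+1}$ to $\mathit{loop}$, and in the dangerous case the two questions must be at different positions because both sides are producing the \emph{same} $\cc$, so $\Vv_{l+1}$ sees suffixes of different lengths and accepts via $\mathit{neqtest}$. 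You should reproduce this two-case analysis rather than assume a symmetric challenge.
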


\medskip

\begin{proof}
  The proof is by induction on $l$. For $l=1$ this is obvious since
  there are no environment moves and all possible behaviours leading
  to the accepting state are iterated $1$-counters.

  Let us consider level $l+1$. Recall that $\Cc^{l+1}$ is constructed
  from $\Cc^l$, $\bar{\Cc}^l$, and three new processes: $\rlp$,
  $\brlp$, $\Vv_{l+1}$. Fix an iterated $(l+1)$-counter $\cc$. Observe that
  the projection of $\cc$ on the alphabet of $l$-counters, namely
  $\bigcup_{i=1,\dots,l} \S^\#_i$, is an iterated $l$-counter.  By
  induction we have a winning strategy producing this
  counter in $\Cc^l$. We play this winning strategy in the $\Cc^l$ and
  $\bar\Cc^l$ parts of $\Cc^{l+1}$. It remains to say what the new
  processes  should do.

  Process $r_{l+1}$ should just produce $\cc$. By induction assumption
  we know that the letters this process reads from $r_l$ are the
  projection of $\cc$ on the alphabet of the $l$-counter; and it is so no
  matter if there are environment questions in $\Cc^l$ or not. So
  process $\rlp$ has to just fill in missing $\S_{l+1}$ letters. If
  the environment asks no questions to $r_{l+1}$ then at the end of
  $\cc$, this process will do $\#_{l+1}$, then $\top_{l+1}$ and enter the accepting
  state. Analogously for $\bar{r_{l+1}}$. At the same time process $\Vv_{l+1}$
  will be at state $\mathit{eq}$ and it can enter the accepting state, too,
  since it can count how many $\$_l$ symbols he has received.

  Let us suppose now that the environment chooses a question action in
  $r_{l+1}$ or $\bar{r_{l+1}}$. Let $i$ be the index of an $l$-counter $u_i$
  within $\cc$ at which the first question is asked. We will consider
  two cases: (i) the question is asked in $\bar{r_{l+1}}$, (ii) the
  question is asked in $r_{l+1}$ but not in $\bar{r_{l+1}}$.

  If a question is asked in $\bar{r_{l+1}}$ then the play has the
  following form:
  \begin{center}
  \begin{tikzpicture}[yscale=.7]
    \draw (0,2)node{$\rlp$:}++(1.5,0)node{$\dots u_{i-1}$}++(1.2,0)node(dol){$\$_l$}
    ++(1,0)node{$u$}++(1,0)node(d){$d$};
    \draw (0,1)node{$\Vv_{l+1}$:};
    \draw
    (0,0)node{$\brlp$:}++(1.5,0)node{$\dots\bar{u_{i-1}}$}++(1,0)node(dolb){$\bar
      \$_l$}++(1,0)node{$\bar v$}++(1,0)node(e){$\bar e$};
    \draw (dol)--+(0,-1);
    \draw (dolb)--+(0,1);
    \draw (d)--+(0,-1);
    \draw (e)--+(0,1);
  \end{tikzpicture}
  \end{center}
  \noindent with $u,\bar v$ being prefixes of $u_i$; $e$ being a
  question, and $d$ a synchronization action of $\rlp$ with $\Vv_{l+1}$. So
  $d$ can be a question or $\$_l$. Observe that after reading
  $\bar{\$_l}\$_l$ process $\Vv_{l+1}$ is in the state $\mathit{eq}$. It means that if
  the sequence $\bar e d$ is not $\bar{(\dar,c)}(\dar,1-c)$ for some
  $c\in\S_l$ then $\Vv_{l+1}$ enters state $\mathit{loop}$. From there it can
  calculate how many inputs from $r_{l+1}$ and $\bar{r_{l+1}}$ it is
  going to receive. It receives them and then enters the accepting
  state.  If $\bar e d$ is $\bar{(\dar,c)}(\dar,1-c)$ then $\Vv_{l+1}$ enters
  state $\mathit{neqtest}$. Since $\rlp$ and $\brlp$ output the same iterated
  counter it must be that the questions are placed in different
  positions of the two counters. But then $\Vv_{l+1}$ will receive from the
  two processes a different number of $\S_l$ letters. Hence it will
  enter eventually into the accepting state also in this case.

  Process $\brlp$ after receiving a question moves to a test component
  where it transmits the remaining part of the $l$-counter to $\Vv_{l+1}$
  followed by $\bar{\$_l}$. Then it enters into the loop state of the
  test copy and can continue to generate $\cc$ since it can do any
  transition in this state. As for process $\rlp$, if $d$ is a
  question, then it does the same thing as $\brlp$.  If $d$ is $\$_l$
  then $\rlp$ can continue to produce $\cc$, and both $\Vv_{l+1}$ and $\brlp$
  can simulate their behaviour as if no question has occurred. If the
  environment asks a question to $\rlp$ at some moment, it too will
  enter into accepting state and continue to produce $\cc$.

  If the first counter with a question is in $\rlp$ but not in $\brlp$
  then the play has the form:
  \begin{center}
  \begin{tikzpicture}[yscale=.7]
    \draw (0,2)node{$\rlp$:}++(1.5,0)node{$\dots u_{i-1}$}++(1.2,0)node(dol){$\$_l$}
    ++(1,0)node{$u$}++(1,0)node(d){$d$};
    \draw (0,1)node{$\Vv_{l+1}$:};
    \draw
    (0,0)node{$\brlp$:}++(1.5,0)node{$\dots\bar{u_{i-1}}$}++(1,0)node(dolb){$\bar
      \$_l$}++(1,0)node{$\bar u_i$} ++(1,0)node(dolbp){$\bar\$_l$}++(1,0)
    node{$\bar v$}++(1,0)node(e){$\bar e$}; 
    \draw (dol)--+(0,-1);
    \draw (dolb)--+(0,1);
    \draw (dolbp)--+(0,1);
    \draw (d)--+(0,-1);
    \draw (e)--+(0,1);
  \end{tikzpicture}
  \end{center}
  where $u$ is a prefix of $u_i$, $\bar v$ a prefix of $u_{i+1}$, $d$
  is a question, and $\bar e$ a synchronization of $\brlp$ with
  $\Vv_{l+1}$. Observe that after reading $\$_l\bar{\$_l}$ process $\Vv_{l+1}$ is in
  state $\mathit{succ}$. As before our first goal is to show that $\Vv_{l+1}$
  gets to an accepting state. If the sequence $d\bar e$ is not
  $(\inctest,c_l)\bar{(\inctest,1-c_l)}$ then we reason as in the
  previous case. Otherwise $\Vv_{l+1}$ gets to state $\mathit{neqtest}$. As before we
  can deduce that the two questions are asked at different positions of
  the respective counters. Which means that $\Vv_{l+1}$ will receive a
  different number of $\S_l$ letters from $\rlp$ and $\brlp$ so it
  will get to state $\mathit{loop}$. The rest of the argument is exactly
  the same as in the previous case.
\end{proof}

We will show that in order to win in $\Cc^l$ the system  has no
other choice than to generate an iterated $l$-counter. Before this we
present a  general useful lemma:

\medskip

\begin{lemma}\label{l:sub}
  Consider a plant $\Cc$ consisting of two plants $\Cc_1$ and $\Cc_2$
  over process set $\PP_1$ and $\PP_2$, respectively. We assume that
  there exist $r_1 \in \PP_1$ and $r_2 \in \PP_2$ such that each
  action $a$ in $\Cc$ is such that either $\loc(a) \subseteq \PP_1$ or
  $\loc(a) \subseteq \PP_2$, or $\loc(a) \subseteq
  \set{r_1,r_2}$. Then every winning strategy in $\Cc$ gives a winning
  strategy in $\Cc_1$.
\end{lemma}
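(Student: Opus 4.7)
Proof plan. The natural idea is to define $\s_1$ for $\Cc_1$ by restriction: for every $p \in \PP_1$ and every view $v$ arising from a $\s_1$-play in $\Cc_1$, set $(\s_1)_p(v) := \s_p(v) \cap \S_1$, where $\S_1$ denotes the action alphabet of $\Cc_1$. By Lemma~\ref{l:separation} we may assume without loss of generality that $\s$ satisfies the structural properties stated there; in particular, $\s_{r_1}$ proposes either a single local action, or a set of $\PP_1$-communications, or a set of $\{r_1,r_2\}$-shared communications with $r_2$.

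First I would check that $\s_1$ is a well-defined non-blocking strategy. Since every action of $\Cc$ falls into one of the three categories $\PP_1$-local, $\PP_2$-local, or $\{r_1,r_2\}$-shared, any $\s_1$-play $u_1$ in $\Cc_1$ embeds as a $\s$-play in $\Cc$ in which the $\PP_2$-processes are idle and no $\{r_1,r_2\}$-shared action has fired; hence the views of each $p \in \PP_1$ coincide in the two games, and $(\s_1)_p(v)$ is well-defined via $\s_p(v)$. Because uncontrollable actions are local by the standing hypothesis, each lies entirely in $\S_1$ or in the $\Cc_2$-alphabet, so $\s_1$ preserves the non-blocking of every uncontrollable $\Cc_1$-action.

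Second, I would show that every maximal $\s_1$-play $u_1$ in $\Cc_1$ is winning. Viewing $u_1$ inside $\Cc$ as above, I extend it to a $\s$-play $u$ by appending only $\Cc_2$-moves, both those prescribed by $\s$ for system processes in $\PP_2$ and those contributed by the uncontrollable actions of the $\Cc_2$-environment. Since such moves do not touch $\PP_1$-processes, the projection of $u$ to $\PP_1$ still equals $u_1$, so that once $u$ is made maximal the fact that $\s$ is winning yields that every $p \in \PP_1$ ends in $F_p$, which is what we need.

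The main obstacle is to argue that this extension can indeed be chosen so that it adds no $\Cc_1$-action and no $\{r_1,r_2\}$-shared action. The first part is immediate from the $\s_1$-maximality of $u_1$ combined with the coincidence of views: any $\Cc_1$-action appended by $\s$ would also be proposed by $\s_1$ at the end of $u_1$, contradicting maximality. The second, more delicate part is to rule out $\{r_1,r_2\}$-shared actions during the extension; here I would invoke item~3 of Lemma~\ref{l:separation} applied to $r_1$ to classify $\s_{r_1}$'s proposal at $\view_{r_1}(u_1)$ (each case containing at least one $\Cc_1$-action is ruled out by maximality), and then exploit the fact that no uncontrollable action can force a binary $\{r_1,r_2\}$-synchronization to schedule the $\Cc_2$-extension so that the shared action is never fired before all $\PP_2$-processes have reached their final states. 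This scheduling argument, which leverages the asynchronous semantics of Zielonka automata, is the heart of the proof.
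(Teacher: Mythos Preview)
Your restriction-based $\s_1$ runs into a real obstruction in the ``more delicate'' case. Suppose $u_1$ is $\s_1$-maximal and $\s_{r_1}(\view_{r_1}(u_1))$ consists entirely of $\{r_1,r_2\}$-shared actions (so $(\s_1)_{r_1}$ proposes $\emptyset$ there, and $r_1$ is not final). You want to extend $u_1$ inside $\Cc$ by $\Cc_2$-moves only and reach a maximal $\s$-play. But nothing forbids $r_2$, after some $\Cc_2$-moves, from also proposing only $\{r_1,r_2\}$-shared actions; then a shared action is $\s$-enabled and \emph{any} maximal continuation must fire it. Once it fires, $r_1$'s state changes and fresh $\Cc_1$-actions may become available, so the $\PP_1$-projection of the maximal play is no longer $u_1$ and you learn nothing about the $\PP_1$-states after $u_1$. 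Your scheduling idea (``delay the shared action until all of $\PP_2$ is final'') tacitly assumes that $\Cc_2$ can be driven to its final states without ever synchronizing with $r_1$, which the hypotheses do not grant. Indeed, take $\PP_1=\{r_1\}$, $\PP_2=\{r_2\}$, no local actions at all, and a single controllable shared action sending both processes from their initial to their final states: $\s$ is trivially winning in $\Cc$, yet $\Cc_1$ has no winning strategy, so the bare restriction cannot work. (A secondary issue: item~3 of Lemma~\ref{l:separation} is proved for the fixed leaf $r$ and its parent $q$ using $\Sync^\s_r$, which relies on $r$ having only local actions besides $\S_{q,r}$; invoking it with an arbitrary $r_2$ that has further neighbours in $\PP_2$ requires a separate argument.)

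The paper's one-line proof points to a different construction: fix once and for all an environment behaviour in $\Cc_2$, and let $r_1$'s strategy in $\Cc_1$ internally simulate the whole $\Cc_2$-side (including the $\{r_1,r_2\}$-synchronizations) against that fixed behaviour. In the intended application---descending from $\Cc^{l+1}$ to $\Cc^{l}$---the bridge actions are precisely the former local actions $\S^\#_l$ of $r_l$ that were promoted to shared actions when $r_{l+1}$ was added, so $r_l$ can play them as local actions in $\Cc^l$ while updating its simulated copy of the rest. Every $\s_1$-play in $\Cc_1$ then lifts verbatim to a $\s$-play in $\Cc$ with the fixed $\Cc_2$-environment, and winningness transfers directly, with no scheduling detour needed.
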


\medskip

\begin{proof}
  Just fix the behaviour of the environment in $\Cc_2$ and play the
  strategy in $\Cc$.
\end{proof}

With this at hand we can now prove the main lemma.

\medskip

\begin{lemma}\label{lemma:strategy-long}
  If $\s$ is a winning strategy in $\Cc^{l+1}$ and $x$ is a
  $\s$-play with no question then the projection of $x$ on
  $\bigcup_{i=1, \ldots,l+1}\S^\#_i$ is an iterated $(l+1)$-counter.
\end{lemma}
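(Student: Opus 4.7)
The plan is to proceed by induction on $l \ge 0$. The base case is $\Cc^1$: this plant is a single process with no environment moves, and by direct inspection its accepting runs are exactly iterated $1$-counters. For the inductive step, assume the lemma at level $l$, and let $\s$ be winning in $\Cc^{l+1}$ with $x$ a question-free $\s$-play.

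I first apply Lemma~\ref{l:sub} to split $\Cc^{l+1}$ into the sub-plant $\Cc^l$ sitting below $\rlp$ and the rest, with boundary pair $(r_l,\rlp)$. The restriction of $\s$ is then winning in $\Cc^l$, so by the induction hypothesis the projection of $x$ onto $\bigcup_{i\le l}\S^\#_i$ is an iterated $l$-counter $\cc=u_0 u_1 u_2\cdots$; symmetrically for the barred alphabet, yielding $\bar\cc=\bar u_0\bar u_1\cdots$. Inspecting the automaton of $\rlp$ (Figure~\ref{fig:main automaton}) then shows that the $\bigcup_{i\le l+1}\S^\#_i$-projection of $x$ has the syntactic form of a nonempty sequence of blocks $x_0 u_0 x_1 u_1\cdots x_{k-1} u_{k-1}\#_{l+1}$ with $x_m\in\S_{l+1}$: the ``zero'' component pins $u_0$ to value $0$, and the ``main'' component emits $\#_{l+1}$ only immediately after a maximal $l$-counter. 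So the syntactic shape of an iterated $(l+1)$-counter is already forced; it remains to verify the values.

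The heart of the argument is to show that every $u_i$ has value $i$. The verifier $\Vv_{l+1}$ forces the alternation $\bar{\$_l}\,\$_l\,\bar{\$_l}\,\$_l\cdots$ (any other ordering blocks $\Vv_{l+1}$ before its final state), so the $i$-th $l$-counters on the two sides meet under well-defined $\mathit{eq}$ or $\mathit{succ}$ test conditions. Suppose for contradiction that some $u_i\ne\bar u_i$, or $\bar u_{i+1}$ is not the successor of $u_i$ (modulo the carry encoded into $\bar\Dd^l$). Both counters are valid $l$-counters of the same length, so they differ at some $\S_l$-position $p$ with bits $c\ne\bar c$. I then let the environment trigger matching tests $(\eqtest,c)$ at position $p$ of $\rlp$ and $\bar{(\eqtest,\bar c)}$ at position $p$ of $\brlp$ (analogously with $\inctest$ in the successor case); the system's responses are uniquely determined by $\s$. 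The verifier then moves into state $\mathit{neqtest}$ because the recorded bits disagree, and the residual $\S^0_l$-streams flowing from the two test components have equal length, since position $p$ and the $l$-counter length coincide on both sides. Consequently $\Vv_{l+1}$ is blocked non-accepting, contradicting that $\s$ is winning.

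With all values forced to follow $0,1,\ldots,\Tower(2,l)-1$, the ``main'' component emits $\#_{l+1}$ exactly at the correct moment, and the carry-handling successor test built into $\bar\Dd^l$ forces each subsequent $(l+1)$-block to restart at $0$, so the entire projection is an iterated $(l+1)$-counter. The main obstacle is this coordination step: showing that for any winning $\s$ the environment can consistently fire matching tests so that $\Vv_{l+1}$ lands in $\mathit{neqtest}$ with equal residual-stream lengths. This reduces to the observation that the test position is entirely controlled by the environment (via how many $\S_l$-letters it lets pass before firing $(\eqtest,\cdot)$ or $(\inctest,\cdot)$) and that the length of an $l$-counter depends only on $l$ and is therefore the same on both sides.
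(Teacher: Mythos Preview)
Your proposal is correct and follows essentially the same approach as the paper: induction on $l$, use of Lemma~\ref{l:sub} to extract a winning strategy on $\Cc^l$ (the paper applies it twice, peeling off first at $(\rlp,\Vv_{l+1})$ and then at $(r_l,\rlp)$, but your single cut at $(r_l,\rlp)$ is equally valid), the induction hypothesis to obtain the $l$-counter structure, and then the contradiction argument via environment-triggered tests at a common position so that $\Vv_{l+1}$ reaches $\mathit{neqtest}$ and blocks on equal-length residual streams. Your explicit remarks about the ``zero'' component fixing $u_0=0$ and about $\#_{l+1}$ being emitted only after a maximal $l$-counter make the syntactic shape more transparent than in the paper's write-up, but the substance is the same.
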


\medskip

\begin{proof}
  By the construction of $\Cc^{l+1}$, if
  there is no question during a $\s$-play, then the play is uniquely
  determined by the strategy. We will show that this unique play is an
  iterated $(l+1)$-counter. 

  By applying Lemma~\ref{l:sub} twice we obtain from $\s$
  a winning strategy in
  $\Cc^l$. By induction assumption the projection of $x$ on
  $\bigcup_{i=1, \ldots,l} \S^\#_i$  is an iterated
  $l$-counter. 
  Thus, between every two  consecutive
  $\$_l$ we have a letter from $\S_{l+1}$, followed by an $l$-counter and
  $\#_l$ (as long as we stay in the main part). The same
  holds for the $\bar{r_{l+1}}$ part. It remains to
  show that the sequence $u_0,u_1,\ldots$ of these $l$-counters 
  represents the values $0,1,\ldots$ modulo $\Tower(2,l)$, and the same
  for the sequence $\bar{u_0},\bar{u_1},\ldots$

  Assume that this is not the case and let $i$ be the index where the
  first error occurs. We will construct a play winning for the
  environment. 

  Let us first assume that the value of $\bar{u_i}$ is correct but the
  one of $u_i$ is not. Let $k$ be the first position where the error
  occurs in the $u_i$ counter. After the $k$-th letter of $u_i$ is
  transmitted to $r_{l+1}$ the environment can execute action
  $(\eqtest,c)$. Similarly, in process $\bar{r_{l+1}}$ after the
  $k$-th letter the environment can execute
  $\bar{(\eqtest,1-c)}$. Notice that these two questions are
  concurrent and happen after the letters of the corresponding
  counters are generated. Process $\Vv_{l+1}$ goes
  to $\mathit{neqtest}$ since it receives $(\eqtest,c)$, and
  $\bar{(\eqtest,1-c)}$. On the other levels the environment does not
  choose test actions. By induction, processes $r_l$ and $\bar{r_l}$
  will continue to generate iterated $l$-counters, since there are no
  questions in $\Cc^l$ and $\bar{\Cc^l}$.  As the environment
  has chosen the same position $k$ in both counters, process $\Vv_{l+1}$ will
  receive the same number of letters from $r_{l+1}$ and
  $\bar{r_{l+1}}$ thus entering into a rejecting state. This
  contradicts the assumption that the strategy in $\Cc^{l+1}$ was
  winning.

  The second case is where the value of $u_i$ equals $i
  \pmod{\Tower(2,l)}$, but the one of $\bar{u_{i+1}}$ is different from
  $(i+1) \pmod{\Tower(2,l)}$. Let $k$ be the position of the first
  error. In this case the environment can execute actions
  $\bar{(\inctest,c)}$, and $(\inctest,c)$ or $(\inctest,1-c)$, depending
  on whether or not there is some $a_l$ before position $k$ in
  $u_i$. As in the case above, these two questions are concurrent because
  process $\Vv_{l+1}$ first synchronizes with $\bar{r_{l+1}}$ and then with
  $r_{l+1}$. The same argument as above  shows that in
  this case we could find a play consistent with $\s$ and
  winning for the environment. 
\end{proof}

Putting Lemmas~\ref{lemma:strategy-exits}
and~\ref{lemma:strategy-long} together we obtain:

\medskip

\begin{proposition}\label{prop:game corrctness}
  For every $l$, the system has a winning strategy in $\Cc^l$. For
  every such winning strategy $\s$, if we consider the unique $\s$-play without
  questions then its projection on $\bigcup_{i=1,\ldots,l} \S^\#_i$
  is an iterated $l$-counter.
\end{proposition}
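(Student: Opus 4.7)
The plan is to derive the proposition as a direct packaging of Lemma~\ref{lemma:strategy-exits} and Lemma~\ref{lemma:strategy-long}, with only a small amount of glue for the base case and for the uniqueness claim about the question-free play.

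First I would handle the existence of a winning strategy. By induction on $l$ one sees that there is at least one iterated $l$-counter: at level $1$, the word $a_1\#_1$ is a $1$-counter; at level $l+1$, interleaving $l$-counters of values $0,1,\dots,\Tower(2,l)-1$ with an arbitrary choice of bits from $\S_{l+1}$ yields an $(l+1)$-counter. Feeding any such $\cc$ into Lemma~\ref{lemma:strategy-exits} produces a winning strategy for $\Cc^l$, establishing the first clause of the proposition.

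Next I would justify that there is indeed a \emph{unique} $\s$-play without questions. Inspection of the construction of $\Cc^l$ shows that the only environment actions are the test actions $(\eqtest,c), (\inctest,c)$ (and their barred versions) and the action $\sskip$ offered by the main part of each $r_{l+1}$; imposing the no-question condition forces the environment to play $\sskip$ whenever a test is offered. All remaining transitions at each step are either controlled by $\s$ or are the single available move, so the play is fully determined once $\s$ is fixed.

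Finally, for the content of this unique play, I would split on $l$. The base case $l=1$ is immediate from the automaton $\Cc^1$: it has no environment moves, so any successful run of the strategy must produce a word matching the pattern of an iterated $1$-counter followed by $\top_1$. For $l\ge 2$, the claim is precisely Lemma~\ref{lemma:strategy-long} applied to the strategy $\s$ in $\Cc^{(l-1)+1}=\Cc^l$; its projection onto $\bigcup_{i=1,\dots,l}\S^\#_i$ is an iterated $l$-counter as required. Since the heavy lifting (the inductive case analysis using Lemma~\ref{l:sub} and the equality/successor challenges) is already carried out inside Lemma~\ref{lemma:strategy-long}, there is no real obstacle at this final step; the main subtlety is purely linguistic, namely being careful about what ``no question'' means in the combined architecture so that the uniqueness of the play is unambiguous.
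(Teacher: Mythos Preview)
Your proposal is correct and follows essentially the same approach as the paper, which simply states that the proposition is obtained by ``putting Lemmas~\ref{lemma:strategy-exits} and~\ref{lemma:strategy-long} together.'' You add helpful glue that the paper leaves implicit (existence of at least one iterated $l$-counter to feed into Lemma~\ref{lemma:strategy-exits}, the base case $l=1$, and the justification of uniqueness of the question-free play), but the underlying argument is identical.
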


\medskip

\begin{theorem}\label{th:nonelem}
  Let $l>0$.  There is an acyclic architecture of diameter $2l+1$ and
  with $(2^{l+3}-3)$ processes such that the space complexity of the control
  problem for it is $\Omega(\Tower(n,l))$-complete. 
\end{theorem}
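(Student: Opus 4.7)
The plan is to reduce the acceptance problem for alternating Turing machines running in space $\Tower(n,l)$, which is $\Tower(n,l)$-space complete by the Alternation Theorem, to our control problem on an acyclic architecture of the required size, by combining the iterated counter construction $\Cc^l$ with the Turing-machine simulation of Proposition~\ref{p:3}.

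First I would adjust the base level of $\Cc^l$ so that it produces iterated counters over $\Tower(n,l)$ rather than $\Tower(2,l)$ values. This is done by redefining a $1$-counter to be an $n$-bit sequence instead of a single bit: the automaton of level~$1$ is expanded to a straight line of length~$n$ in which each bit is a controllable choice, and one checks that Lemmas~\ref{lemma:strategy-exits} and~\ref{lemma:strategy-long} go through unchanged. The resulting plant still has the same shape and size up to a multiplicative constant, and its iterated counters are now large enough to address $\Tower(n,l)$ tape cells.

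Next I would adapt the two top generating processes $\rlp,\brlp$ and the verifier $\Vv_{l+1}$ of the construction so that, instead of outputting plain iterated counters, $\rlp$ and $\brlp$ synchronously output a sequence of ATM configurations, each consisting of $\Tower(n,l)$ tape cells, every cell being tagged with its $l$-counter index produced by the $\Cc^l$ and $\bar\Cc^l$ subtrees. Existential transitions of $M$ are controllable choices of $\rlp$ and $\brlp$, universal ones are uncontrollable, and universal choices made on the two sides are synchronized via $\Vv_{l+1}$, exactly as in Proposition~\ref{p:3}. The verifier $\Vv_{l+1}$ is extended with two new kinds of test, initiated by the environment and entirely analogous to the equality and successor tests in Proposition~\ref{p:3}: it nondeterministically reads the tape cell challenged on the $\rlp$-side together with its counter tag, the corresponding cell from the $\brlp$-side with its counter tag, and accepts iff the counter tags differ or the tape symbols are consistent with the selected test. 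Because the counter tags are produced by the lower $\Cc^l,\bar\Cc^l$ subtrees, which by Proposition~\ref{prop:game corrctness} are forced to generate genuine iterated counters of length~$\Tower(n,l)$, the environment is guaranteed a way to pinpoint any erroneous cell on either side.

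Correctness is then a direct combination of Proposition~\ref{prop:game corrctness} with the argument of the lemma following Proposition~\ref{p:3}: if $M$ accepts, the system wins by letting $\rlp$ and $\brlp$ output an accepting computation tree branch while the $\Cc^l$ subtrees output honest counters (this is possible by Lemma~\ref{lemma:strategy-exits}); if some strategy wins, Lemma~\ref{lemma:strategy-long} forces the iterated counters below to be honest, and then the equality/successor tests force the top-level configuration sequence to be a genuine accepting computation of $M$. For the size bound, the top level adds the three processes $\rlp,\brlp,\Vv_{l+1}$ to two copies of $\Cc^{l+1}$, which by the recurrence $|\Cc^{l+1}|=2|\Cc^l|+3$ and $|\Cc^1|=1$ gives $2^{l+3}-3$ processes in total; the tree depth computed along the same recurrence yields diameter $2l+1$. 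The main obstacle will be arranging the test mechanism so that the environment's choice of a cell position in a configuration of $\rlp$ is correctly correlated with its choice on the $\brlp$ side purely through the counter tags flowing up through the tree, without destroying the soundness argument of Lemma~\ref{lemma:strategy-long} used for the subcounter games.
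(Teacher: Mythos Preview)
Your proposal is correct and follows essentially the same approach as the paper: scale the $1$-counters to $n$ bits, then modify the top layer $\rlp,\brlp,\Vv_{l+1}$ of the $\Cc^{l+1}$ construction so that the letters from $\S_{l+1}$ encode tape symbols and the zero/successor checks on $(l+1)$-counters become initial-configuration/step checks of the machine, with Proposition~\ref{prop:game corrctness} guaranteeing the honesty of the $l$-counter tags. The only differences are cosmetic: the paper reduces from deterministic $\Tower(n,l)$-space machines rather than alternating ones (your stronger reduction is fine for the stated bound, though the appeal to the ``Alternation Theorem'' is not needed), and in your size computation the phrase ``two copies of $\Cc^{l+1}$'' should read ``two copies of $\Cc^{l}$'' to match both your earlier description and the recurrence you state; with that correction the process count and diameter you obtain coincide with the paper's.
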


\medskip

\begin{proof}
  First observe that the plant $\Cc^l$ has $(2^{l+2}-3)$ processes
  and diameter $2l-1$. It is straightforward to make the
  $l$-counter count till $\Tower(n,l)$ and not to $\Tower(2,l)$ as we
  have done in the above construction. For this it is enough to make
  the $1$-counter count to $n$ instead of just to $2$.

  We will simulate space bounded Turing machines.  Take a machine $M$
  and a word $w$ of length $n$. We want to reduce the problem of
  deciding if $w$ is accepted by $M$ to the problem of deciding if the
  system  has a winning strategy for a plant $\Cc(M,w)$ of size
  polynomial in the sizes of $M$ and $w$. 

  A $\Tower(n,l)$ size configuration can be encoded by an
  $(l+1)$-counter. In an iterated $(l+1)$-counter we can encode a
  sequence of such configurations. The plant $\Cc(M,w)$ is obtained by a
  rather straightforward modification of the construction of
  $\Cc^{l+1}$\anca{change (l+2)  en (l+1)}. Instead of ensuring that the value of the first counter
  is $0$, it needs to ensure that it represents the initial
  configuration. Instead of ensuring that the two successive counters
  represent two successive numbers, it needs to ensure that they
  represent two successive configurations.  Using
  Proposition~\ref{prop:game corrctness}, the problem of deciding if
  a $\Tower(n,l)$-space bounded Turing machine $M$ accepts $w$ is
  polynomially reducible to the problem of deciding if the system has a
  winning strategy in the so obtained $\Cc(M,w)$. The size of $\Cc(M,w)$
  is exponential in $l$ and polynomial in $M,w,n$. The game can be
  constructed in the time proportional to its size.
\end{proof}

\section{Conclusions}

Distributed synthesis is a difficult and at the same time
promising problem, since distributed systems are intrinsically complex
to construct. We have considered  a simple, yet powerful model based on synchronization using shared memory --
as used in multithreaded programs or by  hardware primitives
such as compare-and-swap. 
Under some restrictions we have shown that the resulting control
problem is decidable. Since every process is allowed to interact with
the environment, our tree architectures are quite rich and allow to
model hierarchical situations, like server/clients. Such cases
are undecidable in the setting of Pnueli and Rosner.


Already Pnueli and Rosner in~\cite{PR89icalp} strongly argue in
favour of asynchronous distributed synthesis.  
The choice of transmitting additional information while synchronizing
is a consequence of the model we have adopted. We think that it is
interesting from a practical point of view. It is also interesting
theoretically, since it allows to avoid simple (and unrealistic)
reasons for undecidability. Our lower bound result is somehow
surprising. Since we have full information sharing, all the
complexity must be hidden in the uncertainty about other processes
peforming in parallel. 

Important problems remain open, in particular the decidability 
without the acyclic restriction. A more immediate task is to consider
non-blocking winning conditions and B\"uchi specifications. A further
interesting research venue is synthesis of open,
concurrent recursive programs, as considered e.g.~in~\cite{bgh09}.

\bibliographystyle{abbrv}
\bibliography{bibliography-new}

\end{document}